\newcommand{\C}{{C}}
\newcommand{\NP}{\textsf{NP}}
\newcommand{\PSPACE}{\textsf{PSPACE}}
\newcommand{\buyer}{B:\hspace{1.6mm}}
\newcommand{\seller}{S:\hspace{1.6mm}}
\newtheorem{contract}[theorem]{Mechanism}
\newtheorem{algorithm}[theorem]{Algorithm}
\crefname{contract}{Mechanism}{Mechanisms}
\newcommand{\agentL}{1}
\newcommand{\agentF}{2}
\newsavebox{\measure@tikzpicture}
  \def\tikz@width{#1}%
  \def\tikzscale{1}\begin{lrbox}{\measure@tikzpicture}%
  \edef\tikzscale{\pgfmathresult}%
\begin{document}
\title{Which Games are Unaffected by \\Absolute Commitments?}
\subtitle{\small Full Version\thanks{This paper is the full version of an extended abstract with the same title that appears in The 23rd International Conference on Autonomous Agents and Multi-Agent Systems (AAMAS'24).}}
\author{Daji Landis \orcidlink{0000-0002-9985-0552} \and Nikolaj I. Schwartzbach \orcidlink{0000-0002-0610-4455} \thanks{NIS was supported by the European Research Council (ERC) under the European Union’s Horizon 2020 research and innovation programme (Grant agreement No. 101019547). Part of the work was done while NIS was a Ph.D. student at Aarhus University.}}
\institute{Bocconi University}

\maketitle \thispagestyle{empty}

\begin{abstract}
    We identify a subtle security issue that impacts mechanism design in scenarios in which agents can absolutely commit to strategies. Absolute commitments allow the strategy of an agent to depend on the commitments made by the other agents. This changes fundamental game-theoretic assumptions by inducing a meta-game in which agents choose which strategies they commit to. \\
    
    We say that a game that is unaffected by such commitments is \emph{Stackelberg resilient} and show that computing it is intractible in general, although it can be computed efficiently for two-player games of perfect information. We show the intuitive, but technically non-trivial result, that, if a game is resilient when some number of players have the capacity to make commitments, it is also resilient when these commitments are available to fewer players. We demonstrate the non-triviality of Stackelberg resilience by analyzing two escrow mechanisms from the literature. These mechanisms have the same intended functionality, but we show that only one is Stackelberg resilient. \\
    
    Our model is particularly relevant in Web3 scenarios, where these absolute commitments can be realized by the automated and irrevocable nature of smart contracts.  Our work highlights an important issue in ensuring the secure design of Web3. In particular, our work suggests that smart contracts already deployed on major blockchains may be susceptible to these attacks.
\end{abstract}
\clearpage
\pagenumbering{arabic}
\section{Introduction}
We introduce the concept of an \emph{absolute commitment}, where agents have a more `absolute' ability to commit to strategies than is usually the case in games.  In particular, we grant agents the capacity to make irrevocable commitments that can condition on the content of other agents' strategies.  For commitments to be irrevocable, the strategy, once chosen, cannot be altered by the agent. We will generally consider these strategies to be specified using self-executing programs. The ability of commitments to condition on the contents of other commitments makes sense in a context, such as with smart contracts in Web3, in which all commitments are both self-executing and knowable to other agents. 

The \emph{letters of last resort} are secret letters, written by an incoming British prime minister to the commanders of submarines.  They stipulate what the commanders should do in the case that a nuclear strike destroys the British government.  In particular, the letters could call for nuclear retaliation, and even the possibility of such retaliation should act as deterrence.  If we assume compliance by the commanders, we have an example of a self-executing strategy.  It is not subgame perfect to enact meaningless retaliation, so a letter ordering such destruction illustrates a change in equilibrium as compared to a game without irrevocable commitments.  This point was a tenet of mutually assured destruction, a Cold War deterrence strategy that was predicated on the idea that, if one country were annihilated, their submarines would annihilate the aggressor, despite it being too late for such action to save the homeland. 

While this tells part of the story we explore in this paper, there is an aspect missing. To fully encapsulate our narrative, we first require that the letter be known to the potential aggressor and that the aggressor could condition their strategy based on the strategy it contains. In this version, the incoming British prime minister might be told, 
\begin{quote}
    ``\emph{Our spies tell us that our enemy has ordered their submarines that, if you write a letter ordering retaliation in the case of strike, then their submarine commanders are to carry out a strike! We know that enemy spies will know the content of your letter with complete certainty.}''
\end{quote} 
The incoming prime minister now finds themselves in a bind: a letter stipulating retaliation will precipitate an attack and certain retaliation, while a letter forbidding an attack will leave Britain vulnerable. In this story we can already make two observations: there is a decided first mover advantage, Britain here must respond to commitments already made by an enemy; and what was before a protracted game of many moves will now be decided solely in the commitment phase, after which the agents can only sit back and watch the system unfold, knowing precisely what will happen.  

This idea of a leader-follower dynamic is captured by a Stackelberg competition model. In particular, games in which agents can commit to strategies are known as \emph{Stackelberg games} \cite{stackelberg} and their equilibria are known to be hard to compute \cite{stackelberg_complexity,korzhyk2010complexity}. In this model, the leader can do backwards induction to predict what a follower might do in response to their plan. In our scenario, rather than the leader simply being able to deduce what the followers will do in response to their actions, the leader tells the followers precisely what will be done in response to any possible follower strategy.  A regular Stackelberg model returns us to the simple example wherein a leader says that they will retaliate should destruction befall their homeland.  Having made the inference that no follower would then dare attack, the leader can rest assured they will not need their threat.  Conversely, a leader in the case of our absolute commitments would have to follow a different strategy.  If the incoming British prime minister were the leader, rather than responding to an enemy, they may declare, 
\begin{quote}
``\emph{If the enemies of Britain stipulate no retaliation, so will we. Otherwise we will preemptively attack.}'' 
\end{quote}
Such a commitment would clearly lead to the followers complying with the threat and all the submarine crews could go home. These more complex equilibria are known as reverse Stackelberg equilibria \cite{reverse_stackelberg,reverse_stackelberg_2} and are also studied in a variety of contexts, such as in control theory \cite{groot2014toward,groot2017hierarchical,tajeddini2020decentralized}.

In this work, we consider extensive-form games and use subgame perfect equilibria (SPE) as solution concept. We will use the model of absolute commitments proposed by Hall-Andersen and Schwartzbach \cite{smart_contracts}, which strictly generalizes (reverse) Stackelberg equilibria. In this model, granting an agent the capacity to commit absolutely corresponds to allowing this agent to make a `cut' in the game tree, which must respect information sets. This induces an `expanded game' of exponential size, containing a root node belonging to the agent and a subgame corresponding to each possible cut for the agent. For multiple commitments, the commitments are expanded in a bottom-up manner, which gives a natural means for these commitments to condition on each other. Hall-Andersen and Schwartzbach show that reasoning about the subgame equilibria of these games generalizes (reverse) Stackelberg games and is hard in the general case. 

Given the prevalence of known, algorithmically stipulated games, we find this both a practically important and an interesting line of questioning. One clear potential environment for these absolute commitments is in smart contracts \cite{smart_contracts}.  Smart contracts are decentralized programs that run on a virtual machine implemented by a blockchain, such that, once deployed, their execution is no longer under the control of their creator \cite{ethereum}. They are generally used to store and allocate funds, providing clear economic incentives for these attacks.  Importantly, in most blockchains, the contents of smart contracts are public and can, in principle, reason about each other\footnote{Note that while Rice's theorem \cite{rice} states that any non-trivial property of Turing machines is undecidable, in threatening an agent into deploying contract $X$, a smart contract need not check for semantic equivalence, only whether or not the agent deploys exactly contract $X$.}.  This setting was studied by Landis and Schwartzbach in the context of blockchain transaction fee mechanisms \cite{landis2023stackelberg}. In their work, a group of agents have transactions they want included on a block, the contents of which is controlled by a miner. The agents then pay the miner to have their transactions included on the block. However, \cite{landis2023stackelberg} demonstrate that the leading agent may commit absolutely such that their transaction is included at zero cost, and force the other agents to enter into a lottery for the remaining space on the block. In that instance, the outcome benefited all the agents, except for the miner, and the threats were largely just adhering to the regular SPE, whereas the commitments allowed the agents to spontaneously collude to their mutual benefit. By contrast, in this work, we show instances in which the introduction of additional contracts is good for only one or a few of the agents. 

\subsection{Our Results}
As mentioned, games in some contexts may be susceptible to attacks in which agents commit to self-executing strategies that change the nature and equilibrium of the game. We call these attacks \emph{Stackelberg attacks}, since reasoning about these attack captures Stackelberg games as a special case. We observe that some games are resilient to these attacks, in the sense that the set of subgame perfect equilibria is unchanged by adding sequential commitments for the players (in any order). We say that such a game is \emph{Stackelberg resilient}.  We now state the main results of this work.

We first investigate the computational complexity of determining if a game is Stackelberg resilient. We show, using techniques in \cite{smart_contracts}, that Stackelberg resilience is hard to compute in general, but can be computed efficiently in some simple cases.
\begin{theorem}[Computational Complexity, \cref{thm:no_efficient_algo}]
    Determining Stackelberg resilience is \PSPACE-hard in general, although it can be determined efficiently for two-agent games of perfect information.
\end{theorem}
Technically, what we show is that Stackelberg 1-resilience is $\NP$-hard for games of imperfect information, using the same reduction as in \cite{smart_contracts}. However, this result can be extended to show \PSPACE-hardness when the number of agents is unbounded (regardless of whether the games have perfect or imperfect information).

Next, we analyze two escrow mechanisms from the literature \cite{contract_1,contract_2}. These two mechanisms have the same intended functionality: namely, holding a payment in escrow while a trade is being finalized. Interestingly, we find that one of these mechanisms is indeed Stackelberg resilient, while the other one is not. 
\begin{theorem}[Non-Triviality, \cref{thm:contract1_not_resilient,thm:contract2_full_resilient}]
    Stackelberg resilience is non-trivial: there are two escrow mechanisms, only one of which is resilient.
\end{theorem}
Essentially, in one of the games the seller may create a self-executing strategy that forces the buyer to dispute delivery of an item they actually received (causing the buyer to lose their deposit). This demonstrates that Stackelberg resilience is a non-trivial property and begs the question of which mechanisms can be implemented in a Stackelberg resilient manner. We leave it as interesting future work to develop techniques to design mechanisms that are Stackelberg resilient.

Finally, a natural question is whether or not games retain Stackelberg resilience when one agent's ability to make absolute commitments is taken away, i.e. whether or not $k$-resilience implies $(k-1)$-resilience. We call this property \emph{downward closure} and show that it holds in general for Stackelberg resilience.
\begin{theorem}[Downward Closure, \cref{thm:downward_trans}]
    If a game is Stackelberg resilient when $k$ agents can make self-executing strategies, it is also Stackelberg resilient when $\ell$ agents have this capacity for any $\ell \leq k$.
\end{theorem}
We show this by showing the contrapositive statement: a game that is not $(k-1)$-resilient cannot be $k$-resilient. The proof also implies a monotonicity property: once an agent has a viable attack, that attack cannot be undermined by adding an additional self-executing strategy when that commitment is the final one.

\section{Preliminaries}
We assume familiarity with basic game-theoretic concepts, including the definition of extensive-form games and subgame perfection. For the rest of this paper, we will use subgame perfection as the solution and may refer to this simply as an `equilibrium' for brevity. Following the definition of \cite{contract_2}, a pure strategy profile $s^*$ has \emph{$\varepsilon$-strong game-theoretic security} if, for any $i \in [n]$ and any pure $s_i \neq s_i^*$, it holds that,
    $$
        u_i(s_i^*, s_{-i}^*) \geq u_i(s_i, s_{-i}^*) + \varepsilon.
    $$ 
The parameter $\varepsilon$ measures the amount of security. Intuitively, $\varepsilon$ represents the monetary cost to the adversary for deviating from the intended strategy.

\subsection{Absolute Commitments in Games}
We now formalize what we mean by absolute commitments, as introduced by \cite{smart_contracts}. We define an absolute commitment as a commitment that is both irrevocable and that can condition on the content of other agent's commitments.  Absolute commitments can delineate self-executing strategies. Formally, committing to a self-executing strategy instantiates a special type of node with fanout 1 that allows an agent to commit to a strategy in its subgame (see \cref{fig:expansion} for an example). Such nodes are syntactic sugar for a larger tree that contains a node owned by the agent with a subgame for each self-executing strategy they can deploy. For multiple nodes, the self-executing strategies are expanded in a bottom-up manner. In general, computing equilibria in these succinct games is hard, although an efficient algorithm is known for two-agent games of perfect information.

\begin{theorem}[Hall-Andersen, Schwartzbach, \cite{smart_contracts}]\label{thm:pspace}
    Computing the SPE in games with absolute commitments is \PSPACE-hard in general (even restricted to games of perfect information). However, there is a quadratic-time algorithm that computes the SPE in two-commitment games of perfect information. \qed
\end{theorem}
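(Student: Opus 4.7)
The plan is to attack the two parts of \cref{thm:pspace} separately: a PSPACE-hardness reduction for the general case, and a direct dynamic-programming construction for the two-contract perfect-information case. Throughout, I would lean on the semantics of a smart-contract node as an existential quantifier over strategies in its subgame, which is the lens that makes both directions tractable to formalize.

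For the PSPACE-hardness, the plan is to reduce from TQBF. A quantified Boolean formula $\exists x_1 \forall x_2 \exists x_3 \cdots Q x_n\, \varphi(x_1,\dots,x_n)$ is translated into a game with $n$ levels of smart-contract nodes, alternating between two players, followed by a perfect-information subtree that evaluates $\varphi$ on the chosen truth assignment. A smart-contract node belonging to player~$i$ in the subgame corresponding to $x_i$ lets player~$i$ commit to either the $x_i = 0$ branch or the $x_i = 1$ branch, which faithfully simulates an existential quantifier when $i$ wants $\varphi$ true and a universal quantifier when $i$ wants $\varphi$ false. The payoffs at the leaves are set so that player~$1$ gains if and only if $\varphi$ is satisfied by the committed assignment, making the equilibrium value of the top-level game equal to the truth value of the TQBF instance. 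The reduction is polynomial in the formula size and preserves perfect information, since the tree is fully visible and the only nondeterminism comes from the contract nodes themselves. The \textbf{main obstacle} here is making sure that the weakly malicious tie-breaking assumption and the rationality of intermediate (non-contract) nodes do not create spurious equilibria that do not correspond to valid assignments; I would enforce this by making the payoff gap between ``formula satisfied'' and ``formula not satisfied'' large compared to all other payoff differences in the gadget, so that tie-breaking at ordinary nodes is irrelevant.

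For the quadratic-time algorithm in two-contract two-player perfect-information games, I would exploit the fact that a smart-contract node has fanout one after commitment, so its value is determined by the best commitment available in its subgame. Concretely, I would first preprocess the game by running ordinary backward induction on all maximal contract-free subtrees; this gives, for every node, the SPE outcome vector obtainable without using any contract, in time linear in the tree size. Then I would handle the contract nodes bottom-up. For the innermost contract node, owned by some player~$i$, the set of ``achievable'' outcomes in its subgame can be computed by enumerating, for each leaf~$\ell$, whether player~$i$ has a strategy that forces play to $\ell$ against best-response by the opponent; this is again a linear pass using standard subgame analysis on two-player perfect-information games, and player~$i$ then commits to the achievable leaf maximizing their utility. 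For the outer contract, owned by some (possibly different) player, I would repeat the enumeration, but now the opponent's ``best response'' is itself the result of the inner contract's optimization; since there are at most $O(|T|)$ candidate commitments at each of the two contract nodes, the total work is $O(|T|^2)$.

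The hardest step conceptually is the second one: arguing that, even though a contract in principle commits to a full strategy in its subgame, one need only consider polynomially many \emph{outcome-equivalent} commitments, namely those indexed by the leaf (or subgame value) that the commitment forces. This ``leaf-indexed'' representation of strategies is exactly what breaks the exponential blow-up and is specific to perfect-information two-player settings, because there a single player's commitment and the other player's best response jointly determine a deterministic path to a leaf. Once this representation is justified, both the upper bound and the complexity analysis become routine bookkeeping.
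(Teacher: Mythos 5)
This theorem is imported from \cite{smart_contracts} and the paper gives no proof of it; all it supplies is a restatement of the algorithm (\textsf{InducibleRegion}) and, inside the proof sketch of the next theorem, the shape of the hardness reduction. Judged against that, your hardness argument contains a fatal flaw: it would contradict the second half of the very theorem you are proving. With only two players, nested contract nodes belonging to the same player collapse --- as the paper itself observes, a player holding two nested contracts is equivalent to holding only the topmost one, since the outer commitment already specifies which contract to deploy at the inner node as a function of the intervening history. Your $n$ alternating contract levels therefore reduce to one contract per player, i.e.\ a two-contract game of perfect information, which the quadratic-time algorithm solves; a faithful polynomial reduction of this form would prove $\P = \PSPACE$. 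The reduction is also not polynomial as described: a perfect-information subtree that ``evaluates $\varphi$ on the chosen truth assignment'' must contain a distinct evaluation path for each of the $2^n$ assignments reachable through the $n$ binary gadgets, so the explicit tree is exponential --- and once the tree is explicit, ordinary backward induction already computes the alternating-quantifier value in linear time; TQBF is hard only under succinct representations. This is exactly why the actual reductions use either \emph{imperfect information} with a bounded number of players (the \textsc{CircuitSAT} reduction sketched in the paper, where strategies over information sets encode the assignment compactly and auxiliary players simulate the circuit) or an \emph{unbounded number of players}, each contributing one genuinely distinct contract, for the perfect-information case.

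Your algorithmic half is closer in spirit to \textsf{InducibleRegion}: both rest on indexing commitments by the leaf they induce, giving $O(|T|)$ candidates and a quadratic total. But you omit the ingredient that makes the recursion correct, namely the threat characterization embodied in the $\textsf{threaten}$ operation: at a node owned by the opponent, a leaf $x$ in one subtree is inducible iff the committing player also has an inducible \emph{punishment} $y$ in the sibling subtree with $y_2 < x_2$, and at the committing player's own nodes any leaf below is inducible given some such punishment. ``Forcing play to $\ell$ against best response,'' without this off-path punishment structure, computes the wrong set --- the whole point of a Stackelberg commitment is that it induces outcomes that are \emph{not} reachable under mutual best response. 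Note also that in this setting both contract nodes sit at the root in sequence (the second contract holder observes and best-responds to the first), rather than being nested at different depths as your bottom-up pass assumes.
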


\begin{figure}
    \centering
    \begin{tikzpicture}
   [
    level 1/.style={sibling distance=18mm},
    level 2/.style={sibling distance=14mm},
    level 3/.style={sibling distance=14mm},
    level distance=1.6cm,align=center]
    \node[draw] {$\agentL$}
    child {
        node[draw,circle] {$\agentF$}
        child {
            node[draw, circle] {$\agentL$}
            child { node {$(-\infty,-\infty)$} }
            child { node {$(0,0)$} } }
        child {node {$(1,\,-1)$} }
    };
    \node (A) at (1.85,-1.5) {=};
    \end{tikzpicture}
    \adjustbox{width=0.6\linewidth}{
        \begin{tikzpicture}
        [level 1/.style={sibling distance=25mm},
        level 2/.style={sibling distance=14mm},
        level distance=2cm,align=center,
        every node/.style={thin},
        emph/.style={edge from parent/.style={ very thick,draw}},
        norm/.style={edge from parent/.style={solid,black,thin,draw}}]
        \node[draw,circle] {$\agentL$}
        child[norm] {
            node[draw,circle] {$\agentF$}
            child[emph] {
                node[draw, circle] {$\agentL$}
                child[norm] { node {$(-\infty,-\infty)$} }
                child { node {$(0,0)$} } }
            child[norm] {node {$(1,\,-1)\quad$} }
        }
        child {
            node[draw,circle] {$\agentF$}
            child[emph] {
                node[draw, circle] {$\agentL$}
                child { node {$(0,0)$} } }
            child[norm] {node {$(1,\,-1)\quad$} }
        }
        child[emph] {
            node[draw,circle] {$\agentF$}
            child[norm] {
                node[draw, circle] {$\agentL$}
                child { node {$(-\infty,-\infty)$} }}
            child[emph] {node {$(1,\,-1)$} }
        };
    \end{tikzpicture}
    }
    \caption{
        Expanding an absolute commitment for a simple game.
        The square symbol is a smart contract move for agent $\agentL$.
        We compute all $\agentL$-cuts in the game and connect them with a node belonging to $\agentL$.
        The first coordinate is the leader payoff, and the second is the follower payoff.
        The dominating paths are shown in bold.
        We see that the optimal strategy for $\agentL$ is to commit to choosing $(-\infty, -\infty)$ 
        unless $\agentF$ chooses $(1, -1)$. 
    }
    \label{fig:expansion}
\end{figure}
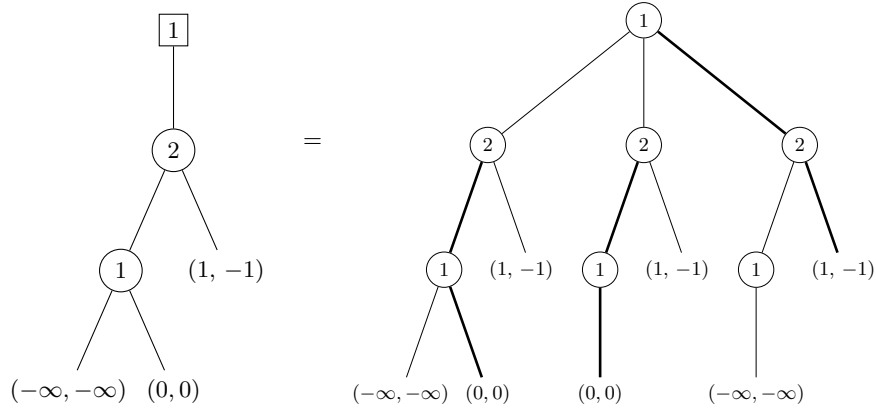

We restate the algorithm, simplified to work only on two-agent games, for self-containment. Let agent 1 have the capacity to commit first, and agent 2 second. Let $A,B \subseteq \mathbb{R}^2$ be two sets of utilities for the two agents and define,
$$
    \textsf{threaten}(A,B) = \{x \in A \mid \exists y \in B. \, y_2 < x_2\},
$$
as the set of nodes from $A$ that agent 1 can threaten agent 2 into accepting, using threats from $B$. The algorithm, which we will call Algorithm \ref{alg}, computes the set of nodes that are inducible by an absolute commitment by the first agent (the inducible region \cite{inducibleregion}) and selects the node that maximizes their utility. For simplicity, assume $G$ is a binary tree. The algorithm recursively computes the nodes that agent 1 can induce in the left and right subtrees and combines them at each branch.

\begin{algorithm}\hspace{-1.125mm} \textsc{\textup{(Hall-Andersen, Schwartzbach, \cite{smart_contracts})}}.\\[2mm]\label{alg}$\textsf{InducibleRegion}(G):$
\begin{enumerate}
    \item[1.] If $G$ is a leaf $\ell$, stop and return $\{\ell\}$.
    \item[2.] If $G$ is a node owned by $i$, with children $G^L$ and $G^R$:
    \begin{enumerate}
        \item[] $I^L \gets \textsf{InducibleRegion}(G^L)$.
        \item[] $I^R \gets \textsf{InducibleRegion}(G^R)$.
        \item[] If $i=1$:
        \begin{enumerate}
            \item[] return $I^L \cup I^R \cup \textsf{threaten}(L, I^L \cup I^R)$.
        \end{enumerate}
        \item[] If $i=2$:
        \begin{enumerate}
            \item[] return $\textsf{threaten}(I^R, I^L)\cup\textsf{threaten}(I^L,I^R)$.
        \end{enumerate}
    \end{enumerate}
\end{enumerate}
\end{algorithm}

Continuing our nuclear example, if Britain has the first commitment move, they could induce favors from other countries as long as they can commit first.  Even in a more complicated game, an agent can make a threat to get something on one side of the tree as long as there exists a reachable threat on the other. 

In practice, running a smart contract requires submitting multiple queries to the blockchain which requires work by the miners and is not free. For the purposes of this work, however, we regard transaction fees as negligible such that we may disregard them entirely. This holds true if the valuations are large compared to the transaction fees. 

\section{Stackelberg Resilient Games}
\label{sec:resilience}
In this section, we formalize what we mean by games that are resilient to Stackelberg attacks. We show that reasoning about such games is hard in general but feasible for two-agent games of perfect information. Let $G$ be an extensive-form game on $n$ agents. We define $\C_i(G)$ as the game that starts with a commitment move for agent $i$ whose only subgame is $G$. Similarly, if $P : [m] \rightarrow [n]$ is a list of agents of length $m$ that specifies the order of the commitments, we denote by $\C_P(G)$ the game with $m$ commitments belonging to the agents specified by the list. Note that $\C_P(G)$ is a succinct representation of a game whose size is exponential in the size of the list. Generally speaking, being the first agent to deploy a commitment is an advantage.  As a result, the equilibrium of the game may be sensitive to the order of the agents in a given list. We call the agent $P(1)$ with the first commitment the \emph{leading commitment agent}. The order of the agents' commitments will depend on the context and is in some cases random. 
Thus, we say a game is Stackelberg resilient if only if the equilibrium remains the same for any order of the commitments. We will now make this notion more formal.

\begin{definition}[Equivalent Games] Let $G, G'$ be two games on $n$ agents. We say that $G$ and $G'$ are \emph{equivalent}, written $G \cong G'$, if for every equilibrium $s^*$ in $G$ (respectively, in $G'$) there exists an equilibrium ${s^*}'$ in $G'$ (respectively, in $G$) such that $u_i(s^*) = u_i({s^*}')$ for every $i\in[n]$. 
\end{definition}

Note that this is an equivalence relation. For two equivalent games, for each equilibrium in either game, there is an equilibrium in the other game with the same payoffs. For auctions, this property is called `weak strategic equivalence'. This definition implicitly suggests that we regard any two outcomes with the same utility vector as equivalent. While this is not necessarily the case in general, it will be the case for the types of games we consider. Namely, in our case we would have $G$ as an extensive-form game in generic form (meaning that all its utility vectors are distinct), and $G'=\C_P(G)$ the same game with contracts in the order specified by $P$. In the game $G'$, we will have multiple copies of each utility vector from $G$, however all its appearances represent the `same' underlying leaf from the game $G$. 

 \begin{figure}
     \centering
     \begin{tikzpicture}
         \node[draw,circle] at (0,0) (g){$2$};
         \node[draw,circle] at (1,-1) (gr){$1$};
         \node at (0.25,-2) (grl){$\bullet$};
         \node at (1.75,-2) (grr){$\bullet$};
         \node at (-1,-1) (gl){$\bullet$};
         \node at (0.125,-2.6) {$\begin{bmatrix*}[l]1:-\infty\\2: -\infty\end{bmatrix*}$};
\node at (1.75,-2.6) {$\begin{bmatrix*}[l]1:0\\2: 0\end{bmatrix*}$};
         \node at (-1,-1.6) {$\begin{bmatrix*}[l]1:1\\2: -1\end{bmatrix*}$};
         \draw[line width=0.5mm] (g) to (gr);
         \draw (g) to (gl);
         \draw (gr) to (grl);
         \draw[line width=0.5mm] (gr) to (grr);
     \end{tikzpicture}
     \caption{An example of a game that is not 1-resilient (to see explicitly why, see \cref{fig:expansion}). Agent $2$ has a coin that they can choose to give to agent $1$. Agent $1$ is subsequently given the option to trigger nuclear annihilation. Without absolute commitments, the SPE is the node $(0,1)$ where agent $2$ keeps the coin because nuclear annihilation is an empty threat. However, when agent $1$ has an absolute commitment, they will automatically retaliate if they do not receive the coin, thus changing the SPE to $(1,0)$. Such an equilibrium is called a Stackelberg equilibrium.  Note that in this example the inducible region for $i$ is everything but nuclear annihilation.}
     \label{fig:example}
 \end{figure}
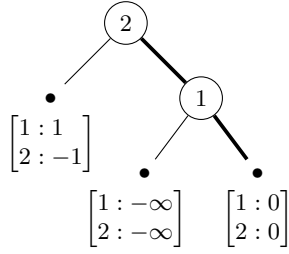

\begin{definition}[Stackelberg Resilience]
    A game $G$ is said to be \emph{Stackelberg $k$-resilient} for an integer $k>0$ if, for any list $P$ of $k$ distinct agents, it holds that $\C_P(G) \cong G$. We say $G$ is \emph{(full) Stackelberg resilient} if it is Stackelberg $n$-resilient.
\end{definition} 

Note that if the SPE of a game is unique and labeled with some utility vector $u \in \mathbb{R}^n$, then Stackelberg resilience says that every SPE in the expanded tree also has to be labeled with $u$. We emphasize that we are \emph{not} assuming uniqueness of the equilibria -- the above definition simply requires the set of equilibria to be unchanged. However, sometimes, it will be convenient for the purpose of analysis to assume the equilibrium is unique -- this is only for simplicity of presentation, and we stress that the results of this work are not contingent on the equilibria being unique. Similarly, we do not require the games to have perfect information, the definition extends easily to games of imperfect information.

In this definition, we require the list $P$ to consist of distinct agents (i.e. $P$ is injective). If this were not the case, $(k+1)$-resilience would trivially imply $k$-resilience for the uninteresting reason that adding absolute commitments for a fixed agent is an idempotent operation, i.e. the same agent having two nested commitments is equivalent to them having only the topmost commitments. Also note that if a game is not Stackelberg resilient, there exists a Stackelberg attack that some agent can commit to in order to obtain better utility (or force worse utility for others) as compared to the situation without commitments. Only Stackelberg resilient games are not susceptible to Stackelberg attacks.

There are a few observations that we can see immediately.  First, if every agent has the same most preferred outcome and this outcome is the SPE of the original game, there cannot be a viable attack and the game is trivially resilient.  We also observe that if an agent has the last commitment and their only node is the root of the original game tree, then the choice of absolute commitment and the choice of move `collapse' and they cannot affect change through their commitment move. 

In terms of computational complexity, we note that the reduction used for Theorem 1 in \cite{smart_contracts} works also for Stackelberg resilience. Moreover, Stackelberg resilience can be computed for two-commitment games of perfect information by invoking Algorithm \ref{alg} twice, relabeling the contract players in the second invocation. This establishes the following result.

\begin{theorem}\label{thm:no_efficient_algo}
    It is $\NP$-hard to compute Stackelberg 1-resilience for games with more than two agents. However, full Stackelberg resilience can be computed efficiently for two-commitment games of perfect information.
\end{theorem}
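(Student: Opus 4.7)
The plan is to treat the two halves of the theorem separately, since they rely on quite different ingredients.

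For the algorithmic half, I first note that a two-player game has full Stackelberg resilience exactly when it is $2$-resilient, and the only injective lists of two players are $(1,2)$ and $(2,1)$. The plan is then to (i) compute the SPE payoff vector $u^\star$ of $G$ by standard backward induction in linear time, (ii) invoke Algorithm 1 from \cite{smart_contracts} on $\C_{(1,2)}(G)$ and on $\C_{(2,1)}(G)$---each running in time $O(|G|^2)$ by \cref{thm:pspace}---to obtain payoff vectors $u^{(1,2)}$ and $u^{(2,1)}$, and (iii) declare $G$ resilient iff $u^{(1,2)} = u^{(2,1)} = u^\star$. Correctness will be immediate from the definitions of $\cong$ and of $k$-resilience, and the total running time is $O(|G|^2)$.

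For the hardness half, I would adapt the reduction behind \cref{thm:pspace}. Starting from a 3SAT instance $\varphi$, I plan to construct a three-player perfect-information game $G_\varphi$ with a designated ``attacker'' player $i^\star$ such that: backward induction on $G_\varphi$ yields a fixed default payoff vector $u_0$; and some contract for $i^\star$ induces a payoff vector different from $u_0$ if and only if $\varphi$ is satisfiable. The two non-attacker players will simulate the variable-assignment gadgets---playing the role that information sets played in the $2$-player imperfect-information version of the reduction---and the attacker's contract will simulate a verifier that threatens to punish any non-satisfying play. Then ``$G_\varphi$ is Stackelberg 1-resilient'' coincides with ``$\varphi \notin \textsc{3SAT}$,'' so an efficient algorithm for 1-resilience would place 3SAT in $\P$.

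The hard part will be the gadget design on the hardness side: one must ensure that (a) the default SPE of $G_\varphi$ does not accidentally depend on the satisfiability of $\varphi$, so that $u_0$ is a well-defined target; and (b) the attacker's menu of useful contract commitments is rich enough to verify satisfying assignments but restricted enough that no spurious commitment can shift the SPE on its own. Once the gadget is in place, both halves of the theorem follow cleanly from \cref{thm:pspace} and Algorithm 1.
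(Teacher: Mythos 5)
Your algorithmic half matches the paper: the paper likewise decides full resilience of a two-player game by running \textsf{InducibleRegion} once for each of the two contract orderings and comparing the resulting payoff vectors against the backward-induction SPE, so that part is correct (modulo the standing genericity assumption that makes comparison of payoff vectors a legitimate test for $\cong$).

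The hardness half has a genuine gap. The paper does not build a perfect-information game: it reuses the \textsc{CircuitSAT} reduction of Hall-Andersen and Schwartzbach, which produces a three-player game of \emph{imperfect} information, and the paper is explicit that the hardness shown is ``for games of imperfect information.'' The imperfect information is not incidental. In that reduction, player 1 chooses the value of each circuit input ``using their strategy,'' i.e., an information set ties together all occurrences of a variable so that any strategy is forced to assign it a single consistent truth value; players 2 and 3 merely evaluate the circuit. In a perfect-information tree every node is its own information set, so nothing enforces consistency across occurrences of the same variable, and your proposal to have ``the two non-attacker players simulate the variable-assignment gadgets, playing the role that information sets played'' is precisely the missing construction: you flag the gadget design as ``the hard part'' but never supply it, and it is far from clear it exists (two-player perfect information is exactly the tractable regime, and the paper never claims \NP-hardness of 1-resilience for perfect-information games with a bounded number of players). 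Separately, 1-resilience quantifies over all three singleton lists $P$, so even with a working gadget you must also argue that giving a contract to either non-attacker player leaves the equilibrium unchanged; the paper does this explicitly (those players already obtain their preferred feasible outcome since player 1 moves last), whereas your sketch only analyzes the attacker's contract.
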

\begin{proof}[Sketch]
Follows using the same reduction as in the proof of \cref{thm:pspace} from \cite{smart_contracts}.  The reduction is from \textsc{CircuitSAT} and constructs a 3-agent game of imperfect information, in such a way that agent 1 obtains good utility by making absolute commitments if and only if the circuit is satisfiable. In particular, agent 1 has a small gadget for every input to the circuit where they choose the value of the input using their strategy. agent 1 prefers the value of true to the value of false. agents 2 and 3 simulate the logic of the circuit. The resulting game has the property that, when no agent has access to absolute commitments, by backward induction, agent 1 will assign $\top$ to every gadget to obtain zero utility in the SPE, while with one agent with absolute commitments there is a self-executing strategy for agent 1 in which they get 1 utility in the SPE if and only the input circuit is satisfiable. In addition, since agent 1 always moves last and thus has to play the SPE, the equilibrium of the game does not change by giving either agent 2 or agent 3 a absolute commitments, as they obtain their preferred outcome among the set of feasible outcomes given that agent 1 always chooses $\top$. Thus, the game is not 1-resilient if and only if the circuit is satisfiable, which shows that computing 1-resilience is \NP-hard for three-agent games. For the latter part, note that by \cref{thm:pspace}, full Stackelberg resilience can be computed efficiently for games of two agents by invoking \textsf{InducibleRegion} twice, relabeling the agents in the second invocation. \qed
\end{proof}

\section{Non-Triviality of Resilience}
\label{sec:commerce}
In this section, we demonstrate non-triviality of Stackelberg resilience by analyzing two escrow mechanisms from the literature \cite{contract_1,contract_2}. Both of these mechanisms enable a buyer and a seller to exchange a good, and are intended to be deployed as smart contracts on a blockchain. In such a setting, the agents natively have the capacity to deploy their own smart contracts, amounting to making an absolute commitment and implementing it as a self-executing strategy.  Both mechanisms involve a seller $S$ and a buyer $B$ that want to exchange an item \emph{it} for a price of $x$. In accordance with \cite{contract_2}, we let $y$ denote the value of \emph{it} to the buyer and $x'$ the value to the seller, and assume $y>x>x'>0$. Both mechanisms are shown to securely implement an escrow functionality with $\varepsilon$-strong game-theoretic security for arbitrarily large $\varepsilon$ (and suitable parameters). The mechanisms are depicted in \cref{fig:gametree2,fig:gametree3}.

The first contract by Asgaonkar and Krishnamachari \cite{contract_1} assumes \emph{it} is digital and can be used as input in a cryptographic hash function. We present a slightly simplified version\footnote{In \cite{contract_1}, $S$ and $B$ also have the option of submitting garbage to the contract. The option is inconsequential to the analysis and has been removed for brevity.}.

\begin{figure}
	\centering
        \begin{scaletikzpicturetowidth}{\columnwidth}
		\begin{tikzpicture}[scale=\tikzscale]
		\begin{scope}[every node/.style={circle,thick,draw}]
		\node (A) at (3.75,0) {$S$}; 
		\node (B) at (1.2,-1.5) {$B$}; 
		\node (H) at (6.3,-1.5) {$B$}; 
		\end{scope}
		\node (D) at (2.6,-4) {$\bullet$};
		\node at (2.6,-4.5) {$\begin{bmatrix*}[l]\buyer y-x\\\seller  x\end{bmatrix*}$};
		\node (F) at (7.5,-4) {$\bullet$};
		\node at (7.5,-4.5) {$\begin{bmatrix*}[l]\buyer \hfill0\\\seller -\lambda\end{bmatrix*}$};
		\node (I) at (4.7,-4) {$\bullet$};
		\node at (4.7,-4.5) {$\begin{bmatrix*}[l]\buyer -x-\lambda\\\seller x\end{bmatrix*}$};
		\node (K) at (0,-4) {$\bullet$};
            \node at (0,-4.5){$\begin{bmatrix*}[l]\buyer y-x-\lambda\\\seller x\end{bmatrix*}$};
		
		\begin{scope}[>={Stealth[black]},
		every node/.style={fill=white,rectangle}]
		\path [-] (A) edge[line width=0.5mm] node {send} (B);
		\path [-] (A) edge node {not send}  (H);
		\path [-] (B) edge[] node {dispute}  (K);
		\path [-] (B) edge[line width=0.5mm] node {accept} (D);
		\path [-] (H) edge[line width=0.5mm] node {dispute} (F);
		\path [-] (H) edge node {accept} (I);
		\end{scope}
		\end{tikzpicture}
        \end{scaletikzpicturetowidth}
  
	\caption{Mechanism \ref{contract_1} represented as an extensive-form game. Each branch is labeled with the agent who owns the node. We denote by $u^j$ the $j^\text{th}$ utility vector from left-to-right, for $j=1\ldots 4$.}
	\label{fig:gametree2}
\end{figure}
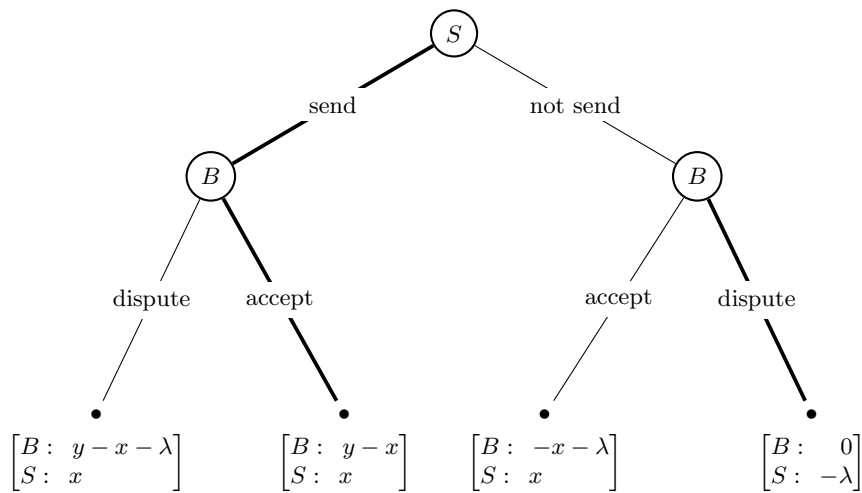

\begin{contract}\hfill\label{contract_1}
\begin{enumerate}
    \item[1.] $S$ makes public $h \gets H(it)$ where $H$ is a hash function, and deposits $\lambda$ money to the contract.
    \item[2.] $B$ deposits $x+\lambda$ money to the contract.
    \item[3.] $S$ either submits $it$ (or an encryption thereof) to the contract, or submits a wrong item.
    \item[4.] $B$ either \textbf{accepts} or \textbf{disputes} delivery of the item.
    \begin{enumerate}
        \item[4.1.] If $B$ accepted, $\lambda$ money is given to $B$ and $x+\lambda$ money given to $S$, and the contract terminates.
        \item[4.2.] If $B$ disputed, the contract recomputes $H(it)$ and compares with $h$.
        \begin{enumerate}
            \item[4.2.1.] If equal, it forwards $x+\lambda$ money to $S$.
            \item[4.2.2.] If unequal, it forwards $x+\lambda$ money to $B$.
        \end{enumerate}
    \end{enumerate}
\end{enumerate}
\end{contract}
\begin{theorem}[Asgaonkar, Krishnamachari \cite{contract_1}]\label{thm:non_trivial_1}
    For any $\varepsilon\geq 0$, and any sufficiently large $\lambda>0$, Mechanism \ref{contract_1} has $\varepsilon$-strong game-theoretic security. 
\end{theorem}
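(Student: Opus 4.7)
The plan is to verify the $\varepsilon$-strong game-theoretic security inequality directly by enumerating every unilateral deviation from the honest strategy profile $s^*$ and choosing $\lambda$ large enough that each deviation loses at least $\varepsilon$ against $s^*_{-i}$.

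First I would pin down $s^*$: the seller $S$ plays \emph{send} at the root, and the buyer $B$ plays \emph{accept} at the left $B$-node (where $S$ sent) and \emph{dispute} at the right $B$-node (where $S$ did not send). Under $s^*$, play reaches the leaf labeled $u^2 = (y-x, x)$, so $u_S(s^*) = x$ and $u_B(s^*) = y - x$.

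Then I would handle the two sides in turn. The seller has a unique alternative action at the root, namely \emph{not send}; against $B$'s strategy of disputing at the right node, this reaches $u^4 = (0, -\lambda)$, costing $S$ a drop of $x + \lambda$, so any $\lambda \geq \varepsilon$ works (using $x > 0$). For the buyer, only the action at the left $B$-node lies on the play path, since $S$ sends; switching \emph{accept} to \emph{dispute} there takes play from $u^2$ to $u^1 = (y - x - \lambda, x)$, a drop of exactly $\lambda$, so again $\lambda \geq \varepsilon$ suffices.

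The one subtlety is that buyer deviations which modify only the off-path action at the right $B$-node leave the reached leaf (and both utilities) unchanged, so they cannot meet a strict $\varepsilon$-gap on their own. This is handled by the convention of \cite{good_contract}, under which the inequality is quantified over strategies that differ from $s_B^*$ on the reachable play path; with that convention the enumeration above is exhaustive. The main obstacle is therefore purely cosmetic --- choosing any $\lambda \geq \varepsilon$ makes every nontrivial gap at least $\varepsilon$, which yields the theorem.
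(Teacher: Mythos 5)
The paper does not prove this theorem itself --- it is imported from \cite{bad_contract} and stated without proof --- so there is no in-paper argument to compare against. Your direct verification is correct and is the natural argument: you identify the honest profile $s^*$ (send, then accept on the left $B$-node and dispute on the right), compute the on-path outcome $u^2=(y-x,x)$, and check that the seller's unique deviation lands on $(0,-\lambda)$ for a gap of $x+\lambda$ while the buyer's on-path deviation lands on $(y-x-\lambda,x)$ for a gap of $\lambda$, so $\lambda\geq\varepsilon$ suffices. Your observation about off-path buyer deviations is a real feature of the definition as literally quantified at the root, and your resolution is acceptable; note, though, that the paper's informal lead-in (``if this holds at every subgame\dots all such inequalities are strict with a gap of size $\varepsilon$'') suggests the intended reading is that the $\varepsilon$-gap is required subgame-by-subgame. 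Under that reading the ``subtlety'' dissolves more cleanly: the right $B$-node is checked inside its own subgame, where switching from dispute to accept drops the buyer from $0$ to $-x-\lambda$, again a gap of at least $\varepsilon$ for $\lambda\geq\varepsilon$. Either way the enumeration is exhaustive and the conclusion stands.
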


Another contract by Schwartzbach \cite{contract_2} does not assume \emph{it} can be hashed or processed by the mechanism. Instead, it makes use of an oracle to assess which agent is the honest one. As the oracle is assumed expensive, it is only invoked optimistically in case of disputes. Let $\gamma$ denote the error rate of the oracle.

\begin{contract}\hfill\label{contract_2}
\begin{enumerate}
    \item[1.] $B$ deposits $x$ money to the smart contract.
    \item[2.] $S$ sends \emph{it} to $B$ (off-chain).
    \item[3.] $B$ either \textbf{accepts} or \textbf{disputes} delivery, in which case they deposit $\lambda$ money.
    \begin{enumerate}
        \item[3.1.] If $B$ accepted, $x$ money is forwarded to $S$ and the contract terminates.
        \item[3.2.] If $B$ disputed, $S$ can either \textbf{forfeit} or \textbf{counter} the dispute, in which case they also deposit $\lambda$ money.
        \begin{enumerate}
            \item[3.2.1.] If $S$ forfeited, $x+\lambda$ money is returned to $B$.
            \item[3.2.2.] If $S$ disputed, the oracle is invoked. Whomever is deemed honest by the oracle receives back $x+\lambda$ money.
        \end{enumerate}
    \end{enumerate}
\end{enumerate}
\end{contract}

\begin{theorem}[Schwartzbach, \cite{contract_2}]\label{thm:non_trivial_2}
    If $\gamma<\frac12$ and $\lambda=x$, then Mechanism \ref{contract_2} has $x\,(1-2\gamma)$-strong game-theoretic security.
\end{theorem}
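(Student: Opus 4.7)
My plan is to apply backward induction on the extensive-form tree of \cref{fig:gametree3}, treating the oracle node at step 3.2.2 as a probabilistic leaf, and to verify at each decision node that the intended honest action beats every alternative by at least $\varepsilon = x(1-2\gamma)$. The honest profile $s^*$ has $S$ sending \emph{it} and $B$ accepting; I would also specify the off-path honest behaviors, namely that an honest $S$ counters when unjustly disputed, and that $B$ rejects if no item was received.

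The critical computation is at the oracle leaf. Substituting $\lambda=x$, when $S$ is actually honest and $B$ has deviated by rejecting, countering yields $S$ expected utility $(1-\gamma)(x-x') + \gamma(-\lambda-x') = x(1-2\gamma)-x'$, which exceeds the forfeit value $-x'$ by exactly $x(1-2\gamma)>0$; so an honest $S$ strictly prefers to counter. Symmetrically, when $S$ is dishonest and $B$ has rightfully rejected, countering yields expected utility $\gamma x - (1-\gamma)\lambda = -x(1-2\gamma)<0$, so a dishonest $S$ strictly prefers to forfeit.

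Propagating these subgame values upwards: at $B$'s accept/reject node (facing an honest $S$ who will counter), rejecting yields expected utility $y-(1-\gamma)(x+\lambda) = y - 2x(1-\gamma)$, which falls short of the honest payoff $y-x$ by exactly $x(1-2\gamma)$. At $S$'s send/don't-send node, sending yields $x-x'$, whereas not sending leads (via $B$'s honest rejection and the resulting forfeit) to $0$. The remaining participation decisions carry positive margins under $y>x>x'>0$, so the $x(1-2\gamma)$ gap at the oracle-induced nodes is the binding constraint.

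The main obstacle will be a uniform bookkeeping of deposits, refunds, and the item's differential value ($y$ for $B$, $x'$ for $S$) when combined with the randomized oracle outcome; once this is set up, the remainder is algebra. The whole argument ultimately hinges on the single identity $(1-\gamma)(x+\lambda)-\lambda = x(1-2\gamma)$ at $\lambda=x$, which simultaneously furnishes the incentive for an honest $S$ to counter a spurious dispute and the disincentive for a dishonest $B$ to raise one; this double role is exactly what makes the contract securely implement honest commerce with the claimed gap.\qed
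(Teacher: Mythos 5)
The paper does not prove this statement: it is imported verbatim from \cite{good_contract}, so there is no in-paper proof to compare against. Your strategy --- backward induction on the tree of \cref{fig:gametree3} with the oracle node treated as a lottery, checking an $\varepsilon$-gap at every decision node --- is the natural and, as far as one can tell, the intended argument, and your two oracle-node computations are correct against the figure: with $\lambda=x$, an honest $S$ gains exactly $u^2_S-u^1_S=x(1-2\gamma)$ by countering, and a dishonest $S$ loses exactly $u^6_S-u^5_S=x(1-2\gamma)$ by countering, so those two gaps are tight and do pin down the constant in the theorem.

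Two points need attention. First, your payoff for $B$'s deviation to dispute after delivery, $y-(1-\gamma)(x+\lambda)$, does not match the paper's leaf $u^2_B=y\gamma-(x+\lambda)(1-\gamma)$: the figure counts the item's value $y$ only in the event the oracle rules for $B$, whereas you let $B$ keep $y$ unconditionally. Under the figure's accounting the gap at that node is $y(1-\gamma)+x(1-2\gamma)$, not exactly $x(1-2\gamma)$; the conclusion survives either way, but your claim that this node is tight is inconsistent with the model as drawn. Second, and more substantively, your dismissal of ``the remaining participation decisions'' is too quick at the root: sending yields $u^3_S=x-x'$ while not sending yields $u^6_S=0$, so that deviation's margin is $x-x'$, which the stated hypotheses $y>x>x'>0$ and $\gamma<\tfrac12$ do \emph{not} force to be at least $x(1-2\gamma)$ (take $x'$ close to $x$ and $\gamma$ small). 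Either an additional hypothesis on $x'$ is being used implicitly in \cite{good_contract}, or the seller's residual valuation is accounted for differently there; as written, your argument has an unverified case that can violate the claimed bound, and you should either add the needed inequality $x-x'\geq x(1-2\gamma)$ as an assumption or explain why the root deviation is outside the scope of the security definition.
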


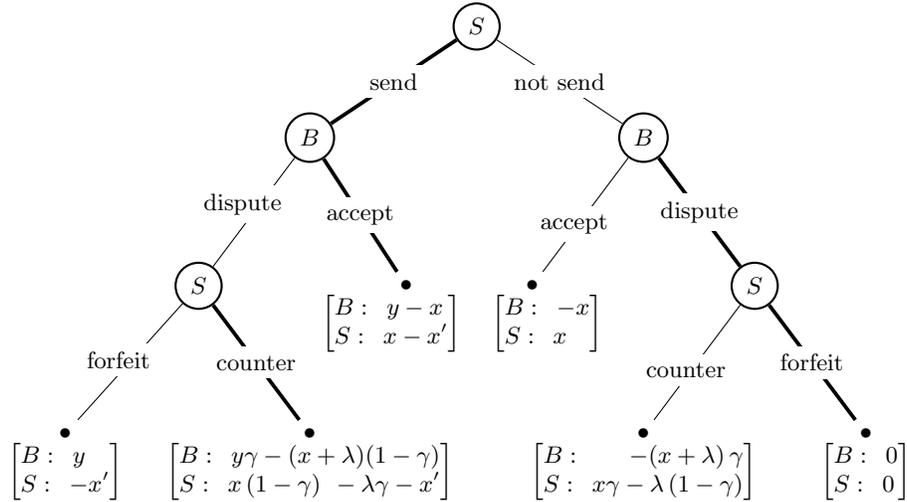
\begin{figure}
	\centering
        \begin{scaletikzpicturetowidth}{\columnwidth}
		\begin{tikzpicture}[scale=\tikzscale]
		\begin{scope}[every node/.style={circle,thick,draw}]
		\node (A) at (3.75,0.5) {$S$}; 
		\node (B) at (1.5,-1) {$B$}; 
		\node (E) at (0,-3) {$S$}; 
		\node (H) at (6,-1) {$B$}; 
		\node (J) at (7.5,-3) {$S$}; 
		\end{scope}
		\node (D) at (2.8,-3) {$\bullet$};
		\node  at (2.6,-3.5) {$\begin{bmatrix*}[l]\buyer y-x\\\seller x-x'\end{bmatrix*}$};
		\node (F) at (-1.8,-5) {$\bullet$};
		\node at (-1.8,-5.5) {$\begin{bmatrix*}[l]\buyer y\\\seller -x'\end{bmatrix*}$};
		\node (G) at (1.5,-5) {$\bullet$};
		\node at (1.5,-5.5) {$\begin{bmatrix*}[l] \buyer y \gamma-(x+\lambda)(1-\gamma)\\\seller x\,(1-\gamma) \hfill - \lambda\gamma-x'\end{bmatrix*}$};
		\node (I) at (4.5,-3) {$\bullet$};
		\node  at (4.7,-3.5) {$\begin{bmatrix*}[l]\buyer -x\\\seller x\end{bmatrix*}$};
		\node (L) at (6,-5) {$\bullet$};
		\node at (6,-5.5) {$\begin{bmatrix*}[l]\buyer\hfill-(x+\lambda)\,\gamma\\\seller x\gamma-\lambda\,(1-\gamma)\end{bmatrix*}$};
		\node (K) at (9,-5) {$\bullet$};
		\node at (9,-5.5) {$\begin{bmatrix*}[l]\buyer 0\\\seller 0\end{bmatrix*}$};
		
		\begin{scope}[>={Stealth[black]},
		every node/.style={fill=white,rectangle}]
		\path [-] (A) edge[line width=0.5mm] node {send} (B);
		\path [-] (A) edge node {not send}  (H);
		\path [-] (B) edge node[xshift=-0.15cm,yshift=0.1cm] {dispute}  (E);
		\path [-] (B) edge[line width=0.5mm] node {accept} (D);
		\path [-] (E) edge node[xshift=-0.15cm,yshift=0.025cm] {forfeit} (F);
		\path [-] (E) edge[line width=0.5mm] node {counter} (G);
		\path [-] (H) edge node[xshift=-0.15cm,yshift=-0.125cm] {accept} (I);
		\path [-] (H) edge[line width=0.5mm] node {dispute} (J);
		\path [-] (J) edge[line width=0.5mm] node {forfeit} (K);
		\path [-] (J) edge node[xshift=-0.15cm,yshift=-0.1cm]{counter} (L);
		\end{scope}
		\end{tikzpicture}
        \end{scaletikzpicturetowidth}
	\caption{Mechanism \ref{contract_2} represented as an extensive-form game. Each branch is labeled with the agent who owns the node. We denote by $u^j$ the $j^\text{th}$ utility vector from left-to-right, for $j=1\ldots 6$.}
	\label{fig:gametree3}
\end{figure}

Although the mechanisms are similar, only one of them is Stackelberg resilient, which demonstrates that Stackelberg resilience is a non-trivial property.
\begin{theorem}\label{thm:contract1_not_resilient}
  Mechanism \ref{contract_1} is not Stackelberg resilient.
\end{theorem}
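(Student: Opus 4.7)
Plan. To show Contract \ref{bad_contract} is not Stackelberg resilient it suffices to exhibit a single contract order $P$ for which $\C_P(G) \not\cong G$. I will take $P=(S,B)$ (seller commits first, buyer second) and invoke Algorithm 1 to compute the inducible region for the seller. The key claim is that, provided $\lambda < y-x$ (so that the item's value strictly exceeds the sum of payment and deposit), the region at the root contains $u^1$ in addition to $u^2$ and $u^4$, after which a weak-malice tiebreak selects $u^1$ as the SPE payoff, breaking equivalence with the SPE payoff $u^2$ of $G$.

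Bottom-up computation. At the left $B$-node with leaves $u^1,u^2$, we have $\textsf{threaten}(\{u^2\},\{u^1\})=\{u^2\}$ since $u^1_B=y-x-\lambda<y-x=u^2_B$, so the region is $\{u^2\}$. Symmetrically the right $B$-node yields $\{u^4\}$. At the root (owned by $S$) the region is $\{u^2,u^4\}\cup\textsf{threaten}(L,\{u^2,u^4\})$, where $L$ is the set of all four leaves. Checking each leaf: $u^1$ is threatenable by $u^4$ iff $u^4_B=0<y-x-\lambda=u^1_B$, i.e.\ iff $\lambda<y-x$; $u^2$ is always threatenable by $u^4$ since $y>x$; and $u^3,u^4$ cannot be threatened, as their $B$-utilities $-x-\lambda$ and $0$ are not exceeded by any $B$-utility in $\{u^2,u^4\}$. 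Hence, under $\lambda<y-x$, the inducible region at the root is $\{u^1,u^2,u^4\}$.

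Final selection. The seller picks the outcome maximizing $S$'s utility. Both $u^1$ and $u^2$ yield $S$ utility $x$, while $u^4$ yields $-\lambda$, so the maximum $x$ is tied between $u^1$ and $u^2$. By weak malice $S$ strictly prefers the outcome worse for $B$: since $u^1_B=y-x-\lambda<y-x=u^2_B$, the seller selects $u^1$. Therefore the SPE payoff of $\C_{(S,B)}(G)$ is $u^1\neq u^2$, so $\C_{(S,B)}(G)\not\cong G$, and Contract \ref{bad_contract} fails Stackelberg $2$-resilience, hence full Stackelberg resilience.

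The main obstacle is to confirm that the algorithmically-detected threat corresponds to a concretely deployable contract rather than a spurious artefact of the recursion. The winning contract for $S$ is the conditional commitment $f(\sigma_B)=\textsf{send}$ iff $\sigma_B$ specifies disputing at the left $B$-node (i.e.\ $\sigma_B\in\{DA,DD\}$), and $f(\sigma_B)=\textsf{not send}$ otherwise; against this, the buyer's best response (for $\lambda<y-x$) is to commit to disputing, yielding $y-x-\lambda>0$, rather than accept and receive the plain refund of $0$ from $u^4$. The outcome is precisely that $B$ rejects a validly-delivered item and loses the deposit $\lambda$, matching the informal attack described in the paper. Weak malice is essential: without it $S$ would be indifferent between $u^1$ and $u^2$ and the original equilibrium might persist as an SPE alongside the attack.
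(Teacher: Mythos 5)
Your proposal is correct and follows essentially the same route as the paper: run Algorithm 1 bottom-up on the order $(S,B)$, obtain the inducible region $\{u^1,u^2,u^4\}$ at the root under the stipulated condition $\lambda<y-x$, and let weak malice break the seller's tie between $u^1$ and $u^2$ in favour of $u^1\neq u^2$; since one failing order suffices for non-resilience, omitting the $(B,S)$ case (which the paper also checks, but needs only for its later 1-resilience claim) is fine, and your explicit description of the seller's conditional commitment is a welcome addition. One cosmetic slip: in arguing that $u^3$ and $u^4$ are not threatenable you state the inequality backwards --- the correct reason is that no element of $\{u^2,u^4\}$ has \emph{strictly lower} buyer utility than $u^3_B=-x-\lambda$ or $u^4_B=0$, not that none exceeds them.
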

\begin{proof}
Label the utility vectors at the leaves of contract, seen as the tree in \cref{fig:gametree2}, $u^j$ for $j \in [4]$ from left to right, where $u^1$ is the leftmost leaf, corresponding to strategy send-dispute, and so on across the base of the tree.  Let $u^j_i$ for $i\in \{B,S\}$ be the utility of the buyer and seller at leaf $j$. Consider the case where the first agent is $S$. We apply Algorithm \ref{alg} to the tree and keep track of set $I$ for each node as we move up the tree. At the leaf level, each leaf is its own set $I$. The next level up depends on the left and right child, $G^L$ and $G^R$, respectively, of the root. We denote $I^L$ and $I^R$ to be the associated sets. Both nodes are owned by the buyer, which is agent 2. It is easy to see that $u_B^2>u_B^1$, which yields $I^L = \{u^2\} $, and $u_B^4>u_B^3$, which yields $I^R =\{u^4\}$, as the inducible regions. Moving up the tree, we arrive at the root, which is owned by the seller. Here the seller can use any of the $I^L \cup I^R$ utilities to threaten the buyer into any of the $L$ leaves. Thus the seller can use $u^4 \in I^R$ to threaten the buyer into $u^1$ as long as $y-x-\lambda > 0$, which is allowed for by the stipulations on parameters in \cite{contract_1}. So $I = \{u^1,u^2,u^4\}$. The seller has equal utility in $u^1$ and $u^2$ and, since we assume the agents are weakly malicious and $u_B^2>u^1_B$, the seller will choose $u^1$, which is not the SPE. Thus the contract is not Stackelberg 2-resilient.  
    If instead $B$ has the first contract, the SPE and reverse Stackelberg equilibria coincide. Since the buyer owns the middle level of nodes, we have $I^L = \{u^1, u^2\}$ and $I^R = \{u^3, u^4\}$.  At the root, the buyer can threaten the seller into $u^1$ and $u^2$ with $u^4$ and into $u^3$ with either element of $S^L$ because the seller is weakly malicious. From $S=\{u^1, u^2, u^3\}$, the buyer will pick $u^2$, which is the SPE.\qed
\end{proof}
Note the $S$ Stackelberg attack would still be viable if we accounted for the loss of the sale item in $u^1_S$ and $u^2_S$, as is the case in the game from \cite{contract_2}. In the full version of the contract in \cite{contract_1}, which allows agents to additionally play a garbage string, it can be readily observed that $S$ can threaten with garbage to get the $u^1$ equivalent regardless of the value of the deposit $\lambda$.
 
\begin{theorem}\label{thm:contract2_full_resilient}
   Mechanism \ref{contract_2} is full Stackelberg resilient.
\end{theorem}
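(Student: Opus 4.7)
My plan is to invoke Algorithm~1 twice, as \cref{thm:no_efficient_algo} permits for two-player games of perfect information: once with the seller $S$ as the leading contract player and once with the buyer $B$. In each case I want to verify that the leading player's optimal inducible outcome coincides with the subgame-perfect equilibrium. Combined with \cref{thm:downward_trans} this simultaneously certifies $1$- and $2$-resilience, so it suffices to handle the two-contract case.

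First I would pin down the SPE by ordinary backward induction using $\gamma < 1/2$ and $\lambda = x$: at the lower $S$-node below send-dispute, counter beats forfeit since $x(1-2\gamma)-x' > -x'$; at the upper $B$-node below send, accept beats dispute since $y-x > y\gamma - 2x(1-\gamma)$; at the lower $S$-node below not-send-dispute, forfeit beats counter since $0 > -x(1-2\gamma)$; at the upper $B$-node below not-send, dispute beats accept since $0 > -x$; and at the root $S$ plays send since $x-x' > 0$. The SPE payoff is thus $u^3 = (y-x, x-x')$, and an inspection of the six leaves shows that the only outcomes strictly preferred to $u^3$ are $u^4 = (-x, x)$ for $S$ and $u^1 = (y, -x')$ for $B$. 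The rest of the argument reduces to ruling these two leaves out of the inducible region of the respective leading player, in either contract ordering.

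Next I would argue that $u^4$ is not inducible. Reaching it requires $S$ to play not-send and then to force $B$ to accept. But in the right subtree, after a dispute $B$'s payoff is at worst $u^5_B = -2x\gamma$ (when $S$'s inner contract commits to counter) and at best $u^6_B = 0$, both of which strictly exceed $u^4_B = -x$ once $\gamma < 1/2$. Hence no commitment by $S$, even with $B$'s inner contract reacting in the $(S,B)$ ordering, can make $B$'s best response be accept. The $u^1$ case is symmetric: it needs forfeit at the lower $S$-node below send-dispute, but $u^2_S = x(1-2\gamma) - x' > -x' = u^1_S$, so forfeit is strictly dominated for $S$ there. No outer commitment by $B$ can override this, and in the $(B,S)$ ordering $S$'s inner contract can only restrict $S$'s own options and thus never commits to the dominated move.

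The main obstacle is the two-contract case with a conditional inner contract, which could a priori undo the naive reasoning above. To close that gap I would trace Algorithm~1 bottom-up and track the inducible sets $I^L, I^R$ at every node, using the sign inequalities $-2x\gamma > -x$ and $x(1-2\gamma) - x' > -x'$ -- both of which are immediate from $\gamma < 1/2$ and $\lambda = x$ -- to rule out the threaten-step ever producing $u^4$ on the right or $u^1$ on the left. This is where the parameter regime of Contract~\ref{good_contract} enters essentially and is what distinguishes it from Contract~\ref{bad_contract}.
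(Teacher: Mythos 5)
Your proposal is correct and takes essentially the same route as the paper: run Algorithm~1 once for each ordering of the two contracts, trace the inducible sets bottom-up, and verify that the leader's optimal inducible leaf is the SPE $u^3$, with the decisive inequalities being exactly the paper's $u^5_B > u^4_B$ and $u^2_S > u^1_S$. One small caveat: excluding $u^4$ and $u^1$ from the respective inducible regions is necessary but not by itself sufficient --- you must also confirm that $u^3$ is itself inducible in both orderings (it is, e.g.\ because the leader can commit to their SPE strategy and the follower's unique best response then yields $u^3$), since otherwise the leader's optimum would be some other leaf and resilience would still fail; the full bottom-up trace you commit to in your last paragraph does cover this.
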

\begin{proof}
In keeping with \cref{thm:contract1_not_resilient}, we label the leaves $u^j$, $j \in [6]$ from left to right, with reference to \cref{fig:gametree3}. Let $I^{LL}$ be the set associated with left child of the left child of the root and $I^{RR}$ symmetrically so on the right.

Suppose first that $S$ is agent 1. Since both nodes at the third level, $G^{LL}$ and $G^{RR}$ belong to the seller, we have $I^{LL} = \{ u_1, u_2\}$ and $I^{RR} = \{ u_5, u_6\}$.  The next level is comprised of nodes owned by the buyer.  On the LHS, the seller can use $u^2$ to threaten the buyer into $u^3$ and $u^3$ to threaten for $u^1$ because $u^1_B>u^3_B>u^2_B$. We have $u^3_B>u^2_B$ if $y-x > y \gamma - (x+\lambda)(1-\gamma)$. Since $\gamma < 1/2$, the desired result is true if $y-x> y - (x+\lambda)(1-\gamma)$.  Simplifying and rearranging yields $\frac{\gamma}{1-\gamma}x < \lambda $, which is a stipulation of \cite{contract_2}. So we have $I^L=\{u^1,u^3\}$.  On the RHS, we can readily see $u^6_B>u^4_B$. We also have $u^5_B>u^4_B$ if $-x < -(x+\lambda)\gamma$. Solving this expression for $\lambda$, we have $\frac{1-\gamma}{\gamma}x>\lambda$, which is again required by the contract. So $B$ will always want to move left, yielding $I^R = \{ u^5, u^6 \}$. At the root, $S$ can threaten using any strategy in $I^L \cup I^R$. It is easy to see that the element with the lowest utility for $B$ is $u^5$.  We have already shown that $u^5_B>u^4_B$. Thus $I = \{ u^1, u^3, u^5, u^6\} \cup \{ u^2 \mid u^5_B<u^2_B, \}$, that is, $u^2$ is in $I$ if $u^5_B<u^2_B$. While it is true that $u^5_B>u^2_B$, it is easier to see that even if $u^2$ were inducible, it would be a less optimal result for $S$ compared to $u^3$, a fact easily seen with the observation that $\gamma <1/2$. Of the remaining choices, it immediate that $u^1$ and $u^6$ will not be optimal strategies for $S$. All that is left to show is $u^3_S > u^5_S$.  The desired result is true if $x-x'> x\gamma -\lambda(1-\gamma)$, which can be rearranged to $x> \frac{1}{1-\gamma}x'-\lambda$. Since we have the assumption that $x>x'$, the previous statement is true if $ x > \frac{1}{1-\gamma}x-\lambda$, given that the coefficient $\frac{1}{1-\gamma}$ must be positive.  Solving for $\lambda$ yields $\lambda > \frac{\gamma}{1-\gamma}x $, which is a requirement of \cite{contract_2}. Thus, $S$ will pick $u^5$, the SPE in \cite{contract_2}.

Now suppose that $B$ has the first contract. $S$ owns $G^{LL}$ and $G^{RR}$ and thus $I^{LL}=\{ u^2\}$ because $u^2_S > u^1_S$.  We have $I^{RR}= \{ u^6\} $ because $u^6_S > u^5_S$, which can readily be seen to be true given the condition  $\lambda > \frac{\gamma}{1-\gamma}x $.  It is easy to see that $I^L = \{ u^2, u^3 \}$ because $u^3_S$ is the largest of the three utilities for $S$ and cannot be used to threaten for any other. On the RHS, we have an analogous situation; it is immediate that $u^4$ cannot be used for threats. Thus $I^R = \{ u^4, u^6 \}$. At the root, we notice that neither $u^4$ nor $u^6$ can be used to threaten for $u^1$. Thus, ${u^2, u^3, u^4, u^6} \subseteq I$ and $u^1 \not\in I$.  It is not clear if $u^5_S>u^2_S$, but we can immediately see that $u^5$ will not be the optimal choice for $B$ if it is in $I$. In fact we see that the only possibly positive inducible choices for $B$ are $u^2$ and $u^3$. We proved $u^3_B > u^2_B$ above. Thus B picks $u^3$, which again coincides with the SPE. Since both arrangements of contracts yield the SPE, it follows from \cref{thm:1resilience} that the game is full Stackelberg resilient.\qed
\end{proof}

\begin{theorem}\label{thm:1resilience}
   Both contracts are Stackelberg 1-resilient.
\end{theorem}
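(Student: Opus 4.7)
The plan is to verify, for each of the two contracts and for each choice of the single contract holder, that the Stackelberg leader's optimal commitment reproduces the SPE utility vector already established in the corresponding theorem from the original papers. Since a single contract at the top of the game simply forces a first commitment by one player, after which the remaining game is played out with no further contracts, the whole argument reduces to: enumerate the contract player's possible commitments, compute the other player's SPE best response to each (breaking ties by weak malice), and check that the commitment maximizing the leader's utility coincides with the SPE outcome.

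For \cref{bad_contract} I would first dispatch the case where $S$ holds the contract by invoking the observation from \cref{sec:resilience}: $S$ owns only the root, so its contract move and its game move collapse into a single first move, and the SPE choice to send remains optimal. For the case where $B$ holds the contract, $B$'s strategy is a pair $(a_L,a_R)$ of responses after send and after not-send; I would enumerate the four pure combinations, compute $S$'s best reply at the root (using weak malice to break the ties between $u^1/u^3$ and $u^2/u^3$ that have $S$-utility $x$), and verify that $(\mathrm{accept},\mathrm{dispute})$ strictly dominates the other three commitments for $B$ and yields $u^2$, matching the SPE of \cref{bad_contract}.

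For \cref{good_contract} the strategy is the same but with more cases. When $S$ has the contract, its commitment is a triple $(a_{\text{root}},a_{LL},a_{RR})$; I would compute $B$'s SPE response for each of the four meaningful combinations and show that the parameter conditions already used in \cref{thm:contract2_full_resilient} (namely $\gamma<\tfrac12$, $x>x'>0$, and $\tfrac{\gamma}{1-\gamma}x<\lambda<\tfrac{1-\gamma}{\gamma}x$) force $(\mathrm{send},\mathrm{counter},\cdot)$ to be $S$'s unique optimum with payoff $x-x'$, yielding $u^3$. When $B$ has the contract, $B$ commits to $(a_L,a_R)$; I would enumerate the four cases and use the subgame-perfect behavior of $S$ at its two interior nodes (which is exactly what was established in the proof of \cref{thm:contract2_full_resilient}) to show that only $(\mathrm{accept},\mathrm{dispute})$ reaches $u^3$ and that it strictly dominates the other three commitments for $B$.

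The main obstacle is not conceptual but organizational: sixteen scenarios in total, several of which require applying the same parameter inequalities that already appear in the proof of \cref{thm:contract2_full_resilient}. No new conditions on $\lambda$, $\gamma$, $x$, $x'$, or $y$ are introduced, so the proof is essentially a lighter re-run of the inducible-region computation of \cref{thm:contract2_full_resilient} in which only the contract holder's nodes are treated as threaten-enabled while the other player's nodes are collapsed to their SPE child. One subtlety worth being careful about is tie-breaking under weak malice in \cref{bad_contract}, where several leaves share the seller's utility $x$; this is precisely what makes it important that $B$'s Stackelberg-optimal commitment be strictly better than the other three, so that the SPE leaf is the unique equilibrium of the one-contract game.
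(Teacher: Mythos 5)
Your proposal is correct, but it takes a genuinely different route from the paper for three of the four cases. The paper's proof is essentially two sentences: for Contract~\ref{good_contract} it invokes \cref{thm:downward_trans} together with the full resilience established in \cref{thm:contract2_full_resilient}; for Contract~\ref{bad_contract} with an $S$-contract it uses the same ``collapse'' observation you cite (so you agree there); and for Contract~\ref{bad_contract} with a $B$-contract it appeals to the \emph{proof} of \cref{thm:downward_trans} applied to the specific order $B$--$S$ --- since that order was shown in \cref{thm:contract1_not_resilient} to reproduce the SPE, the contrapositive of the downward-transitivity argument forces the $B$-only game to reproduce it as well. You instead enumerate the leader's pure commitments directly in each case and compute the follower's best response. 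Your enumeration checks out: for Contract~\ref{bad_contract} the weak-malice tie-breaks among the three leaves with seller utility $x$ are handled exactly as needed, and for Contract~\ref{good_contract} the same parameter inequalities $\frac{\gamma}{1-\gamma}x<\lambda<\frac{1-\gamma}{\gamma}x$ and $\gamma<\frac12$ suffice, with no new conditions. What the paper's approach buys is brevity and reuse of machinery; what yours buys is self-containment and, notably, robustness against a subtlety the paper glosses over --- \cref{thm:downward_trans} as stated concerns resilience over \emph{all} orders, and Contract~\ref{bad_contract} is not $2$-resilient, so the paper must implicitly invoke a per-order version of the downward-transitivity argument, whereas your direct computation needs no such refinement.
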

\begin{proof} 
For Mechanism \ref{contract_2}, the result follows from \cref{thm:downward_trans,thm:contract2_full_resilient}. For Mechanism \ref{contract_1}, if the sole contract is owned by $S$, there is no  buyer move between when $S$ determines its contract and when $S$ moves and has no impact. Thus the game reduces to the SPE. By the proof of \cref{thm:downward_trans}, given that the order $B$-$S$ is 2-resilient, we have that a $B$-contract will coincide with the SPE.\qed
\end{proof}

In order for a Stackelberg attack to be feasible and worthwhile, there needs to be a reachable threat and a more desirable outcome for the threatening agent.  In Mechanism \ref{contract_1}, we see that threatening not to send is a viable threat against the buyer, regardless of what the buyer later plays. This can be used to threaten the buyer into an erroneous dispute resulting in them losing their deposit. Mechanism \ref{contract_2} has the further mechanism of the oracle, which both weakens the threat and removes the incentive for $S$ to attempt to instigate a different outcome. Since the oracle has some error rate, there is a chance that it punishes $S$ if $B$ untruthfully disputes, thus removing any benefit to $S$ of the threat.  The threat is also no longer viable, that is $u_B^5>U_B^2$, given further conditions in \cite{contract_2}.  Thus Mechanism \ref{contract_2} frustrates this type of attack both from the demand and the supply.

\section{Downward Closure}
It is a natural question to wonder if, given resilience in the case with $k$ commitment agents, i.e. agents with the capacity to make absolute commitments, we have resilience in $(k-1)$ case, and thus, inductively, for all subsequent removals of self-executing strategies. The contrapositive of this question is also interesting in its own right: can resilience be restored by adding a self-executing strategy? What if this added self-executing strategy is in the least favorable, last position? Indeed, we need to address the contrapositive to answer the original question.

Consider a game $G^0$ that is not $(k-1)$-resilient.  Then some agent has a Stackelberg attack against the others that results in a better equilibrium for the attacker. These attacks work by allowing the attacker to commit to a worse outcome if the others do not comply. Thus, the attacking agent can coerce the other agents into obeying some threat.  We must now ask if there can be some sort of \emph{threat to the threat}.  As we will illustrate in the following example, there can indeed be a threat to the threat if the new commitment agent gets to go first.  On the other hand, if the new commitment agent has the last commitment, they cannot threaten the original threat.  To intuitively see why going last nullifies any potential threat to the threat, recall that the last commitment is effectively the last move.  Thus, if this final $k^{\text{th}}$ agent could make a threat, it cannot commit until it is too late, in some sense `keeping the threat a secret' until all other agents have already made their moves.  This means that if the $k^{\text{th}}$ agent makes a threat, they know if they will have to play it. 

\begin{figure}
    \centering
    \begin{tikzpicture}
        \node[draw,circle] at (0,0) (g){$i$};
        \node[draw,circle] at (1.5,-1) (gr){$j$};
        \node at (2.2,-2.85) {$\begin{bmatrix*}[l]i: \hfill10\\j: \hfill -1\\k: -10\end{bmatrix*}$};
        \node at (0.625,-2.85) {$\begin{bmatrix*}[l]i: \hfill0\\j: \hfill 10\\k: 10\end{bmatrix*}$};
        \node at (-0.875,-2.85) {$\begin{bmatrix*}[l]i: \hfill-10\\j: \hfill -10\\k: \hfill 0\end{bmatrix*}$};
        \node at (-2.375,-2.85) {$\begin{bmatrix*}[l]i: \hfill-1\\j: \hfill 0\\k: -1\end{bmatrix*}$};
        \node at (0.75,-2) (grl){$\bullet$};
        \node at (2.25,-2) (grr){$\bullet$};
        \node at (-0.75,-2) (glr){$\bullet$};
        \node at (-2.25,-2) (gll){$\bullet$};
        \node[draw,circle] at (-1.5,-1) (gl) {$k$};
        \draw[line width=0.5mm] (g) to (gr);
        \draw (g) to (gl);
        \draw[line width=0.5mm] (gr) to (grl);
        \draw[] (gr) to (grr);
        \draw (gl) to (gll);
        \draw[line width=0.5mm] (gl) to (glr);
    \end{tikzpicture}
    \caption{An example of a game that is 1-resilient, but neither 2-resilient nor 3-resilient.  We denote by $u^\ell$ the $\ell^\text{th}$ utility vector from left-to-right, for $\ell=1\ldots 4$.}
    \label{fig:example_G0}
\end{figure}
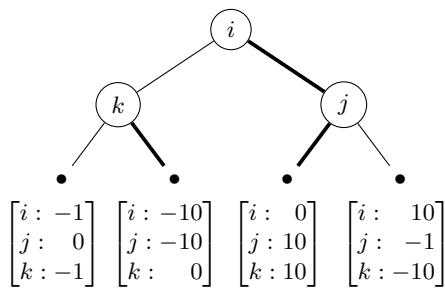

To illustrate this, we introduce the example illustrated in \cref{fig:example_G0}. We label the utility vectors $u^1, \dotsc , u^4$ from left to right for ease of notation. We note the following useful inequalities.
\begin{align*}
    u^4_i &> u^3_i > u_i^1 > u_i^2,\\
    u^3_j &> u^1_j > u_j^4 > u_j^2,\\
    u^3_k &> u^2_k > u^1_k > u_k^4.
\end{align*}
The SPE in this game is easy to read off: we can see that if $k$ gets a choice, they will chose to play right, resulting in $u^2$ and, if $j$ gets a choice, they will chose left, resulting in $u^3$. Seeing this, agent $i$ will choose right, yielding $u^3$ as the SPE. The addition of just one commitment cannot change the equilibrium.  Neither agent $j$ nor $k$ could do better than the SPE and, since $i$ owns the root, committing to a move and going first are effectively the same.  A second commitment can break resilience.  If $i$ has the first commitment and $j$ the second, we can have the following threat from $i$:
\begin{quote}
    \begin{enumerate}
        \item[$i$:] ``\emph{If agent $j$ does not \emph{commit to} playing right, I will play left.}'' 
    \end{enumerate}
\end{quote}
This commitment can be converted into a cut for agent $i$: cut away the right branch in every subgame where agent $j$ did not cut away their left branch. Moving forward, we will not formally convert the commitments into cuts and trust that the intention is clear.

If $j$ does not comply, then $i$ plays left and $k$ will play right, resulting in $u^2$.  If $j$ does comply, the outcome will be $u^4$, which is a better option for $j$ than $u^2$.  Thus, $j$ must comply. We label this game, with the specific commitment order of $i$ then $j$, $G^1$. So far, agent $k$ has had no impact on $i$'s antics as both other agents know that, should the game come to $k$'s node, $k$ has no choice but to play right for $u^2$. In fact, $i$'s threat is predicated on this.

Both $j$ and $k$ are worse off in $G^1$ than in $G^0$.  If $k$ gets the first commitment, before those of $i$ and $j$, the $G^1$ threat can be nullified.  agent $k$ commits to a \emph{threat to the threat} wherein their commitment commits them to play left if $i$ plays the commitment from $G^1$. This means that $i$'s `threat' results now $u^1$, not $u^2$, and $j$ would prefer $u^1$ to $u^4$. Thus if $i$ deployed the commitment from $G^1$, $j$ would not be threatened into committing the commitment stipulated by $i$. Now $i$ can infer that if they try the $G^1$ commitment as second agent after $k$, the equilibrium will be $u^1$, which is worse for $i$ than the old SPE $u^3$.  With no viable threat, we end up on $u^3$ again.

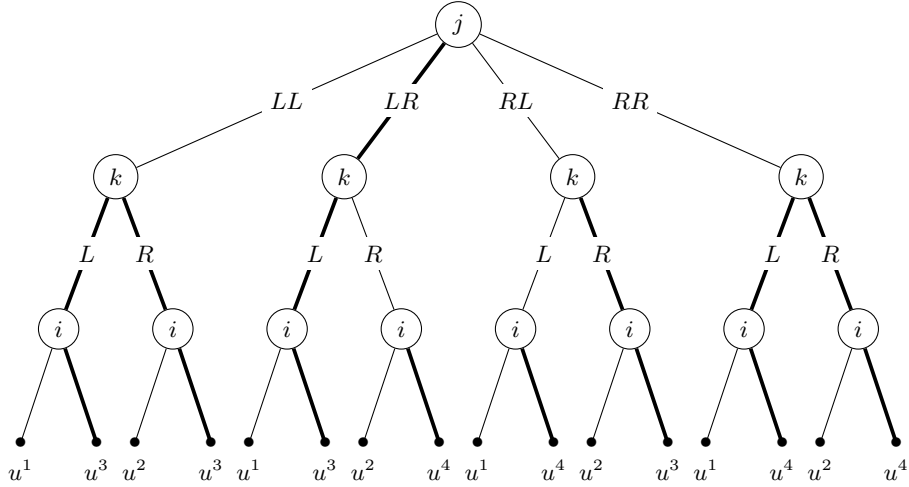
\begin{figure}
    \centering
    \begin{scaletikzpicturetowidth}{\columnwidth}
    \begin{tikzpicture}[scale=\tikzscale]
        \node[draw,circle] at (0,1) (g) {$j$};
        \node[draw,circle] at (-4.5,-1) (gLL) {$k$};
        \node[draw,circle] at (-1.5,-1) (gLR) {$k$};
        \node[draw,circle] at (1.5,-1) (gRL) {$k$};
        \node[draw,circle] at (4.5,-1) (gRR) {$k$};
        \draw (g) --node[fill,white,text=black]{$LL$} (gLL);
        \draw[line width=0.5mm] (g) -- node[fill,white,text=black]{$LR$} (gLR);
        \draw (g) --node[fill,white,text=black]{$RL$} (gRL);
        \draw (g) --node[fill,white,text=black]{$RR$} (gRR);
        \node[draw,circle] at (-5.25,-3) (gLLL) {$i$};
        \draw[line width=0.5mm] (gLL) -- node[fill,white,text=black]{$L$} (gLLL);
        \draw (gLLL) -- (-5.75,-4.5) node {$\bullet$};
        \node at (-5.75,-4.85) {$u^1$};
        \draw[line width=0.5mm] (gLLL) -- (-4.75,-4.5) node {$\bullet$};
        \node at (-4.75,-4.85) {$u^3$};
        \node[draw,circle] at (-3.75,-3) (gLLR) {$i$};
        \draw[line width=0.5mm] (gLL) -- node[fill,white,text=black]{$R$} (gLLR);
        \draw (gLLR) -- (-4.25,-4.5) node {$\bullet$};
        \node at (-4.25,-4.85) {$u^2$};
        \draw[line width=0.5mm] (gLLR) -- (-3.25,-4.5) node {$\bullet$};
        \node at (-3.25,-4.85) {$u^3$};
        \node[draw,circle] at (-2.25,-3) (gLRL) {$i$};
        \draw[line width=0.5mm] (gLR) --node[fill,white,text=black]{$L$} (gLRL);
        \draw (gLRL) -- (-2.75,-4.5) node {$\bullet$};
        \node at (-2.75,-4.85) {$u^1$};
        \draw[line width=0.5mm] (gLRL) -- (-1.75,-4.5) node {$\bullet$};
        \node at (-1.75,-4.85) {$u^3$};
        \node[draw,circle] at (-0.75,-3) (gLRR) {$i$};
        \draw (gLR) --node[fill,white,text=black]{$R$} (gLRR);
        \draw (gLRR) -- (-1.25,-4.5) node {$\bullet$};
        \node at (-1.25,-4.85) {$u^2$};
        \draw[line width=0.5mm] (gLRR) -- (-0.25,-4.5) node {$\bullet$};
        \node at (-0.25,-4.85) {$u^4$};
        \node[draw,circle] at (0.75,-3) (gRLL) {$i$};
        \draw (gRL) --node[fill,white,text=black]{$L$} (gRLL);
        \draw (gRLL) -- (0.25,-4.5) node {$\bullet$};
        \node at (0.25,-4.85) {$u^1$};
        \draw[line width=0.5mm] (gRLL) -- (1.25,-4.5) node {$\bullet$};
        \node at (1.25,-4.85) {$u^4$};
        \node[draw,circle] at (2.25,-3) (gRLR) {$i$};
        \draw[line width=0.5mm] (gRL) --node[fill,white,text=black]{$R$} (gRLR);
        \draw (gRLR) -- (1.75,-4.5) node {$\bullet$};
        \node at (1.75,-4.85) {$u^2$};
        \draw[line width=0.5mm] (gRLR) -- (2.75,-4.5) node {$\bullet$};
        \node at (2.75,-4.85) {$u^3$};
        \node[draw,circle] at (3.75,-3) (gRRL) {$i$};
        \draw[line width=0.5mm] (gRR) --node[fill,white,text=black]{$L$} (gRRL);
        \draw (gRRL) -- (3.25,-4.5) node {$\bullet$};
        \node at (3.25,-4.85) {$u^1$};
        \draw[line width=0.5mm] (gRRL) -- (4.25,-4.5) node {$\bullet$};
        \node at (4.25,-4.85) {$u^4$};
        \node[draw,circle] at (5.25,-3) (gRRR) {$i$}; \draw[line width=0.5mm] (gRR) --node[fill,white,text=black]{$R$} (gRRR);
        \draw (gRRR) -- (4.75,-4.5) node {$\bullet$};
        \node at (4.75,-4.85) {$u^2$};
        \draw[line width=0.5mm] (gRRR) -- (5.75,-4.5) node {$\bullet$};
        \node at (5.75,-4.85) {$u^4$};
    \end{tikzpicture}
    \end{scaletikzpicturetowidth}
    \caption{The game \cref{fig:example_G0} expanded with a commitment for $j$ and then for $k$. All subgames consisting only of moves for $j$ and $k$ have been collapsed to the SPE. Agent $j$ can commit to going $L$ or $R$ depending on whether or not $k$ commits to $L$ or $R$. Now suppose we add a commitment for agent $i$ at the beginning. By making appropriate cuts in this tree, agent $i$ can commit to actions that force $j$ to commit to $RR$. }
    \label{fig:example_G1}
\end{figure}

However, suppose we allow $k$ only the last commitment, yielding the order $i,j,k$. This order means that $k$ is still last to move. Intuitively, the reason that the \emph{threat to the threat} will not work is that $k$ moves last and cannot commit to an action that they know will leave them worse off.  To make this precise, we first expand the game tree for first $k$'s then $j$'s possible commitments in \cref{fig:example_G1}. As pictured, $j$ first makes a commitment, which can predicate on $k$'s commitment, but cannot see their move. Next, $k$ can make a commitment, with full knowledge of $j$'s strategy. Agent $j$'s strategies are labeled with their response based on whether $k$ commits to left or right, respectively. That is, in commitment $LL$, agent $j$ commits to playing left regardless of $k$'s commitment and, in $LR$, $j$'s commitments match $k$'s.  Notice that it is always $i$'s local SPE to move right, resulting in $u^3$, the old SPE and the other two's favorite, in $j$'s right three branches. This is not so in $RR$, but it is not clear that $j$ would ever make that commitment, given that the others all appear to offer a better outcome.

Rather than expand the tree for $i$, which results in a very large graph, we instead explore the cuts that $i$ can make. First, note that $k$'s least favorite outcome, $u^4$, is $i$'s best.  This means that no combination of  $i$'s cuts can be used to persuade $k$ to move toward $u^4$.  For example, if $j$ makes the $LR$ commitment, $i$ could cut away $u^3$ in favor of $u^1$ unless $k$ moves right, but the choice for $k$, between $u^1$ and $u^4$ would still see $k$ opting for $u^1$.  Instead, $i$'s best commitment does not take into account $k$'s commitment at all and is as follows:

\begin{quote}
    \begin{enumerate}
        \item[$i$:] ``\emph{Unless $j$ commits to $RR$, I will cut away the right branch.}'' 
    \end{enumerate}
\end{quote}

This means that when $j$ does commit to $RR$, i.e. when $j$ commits to $i$'s desired $u^4$ regardless of what $k$ commits to, $i$ can be guaranteed $u^4$ if they either commit to or play left.  For every $j$ commitment that is not $RR$, $i$ commits to playing left, towards $k$'s node. With this set up, we now go through the backwards induction of the other agents, given the knowledge of $i$'s commitment. Starting with the final agent, $k$, who has the following position:

\begin{quote}
    \begin{enumerate}
        \item[$k$:] ``\emph{Unless $j$ commits to $RR$, agent $i$ will play toward my node, giving me a choice between $u^1$ and $u^2$. Among these I prefer $u^2$. Regardless of whether or not I deploy a commitment, if I go, I will go last, so whichever outcome I commit or play for I will certainly receive. So, unless agent $j$ commits to $RR$ I will opt for $u^2$.}''
    \end{enumerate}
\end{quote}
Agent $j$ can then make the following inference:

\begin{quote}
    \begin{enumerate}
        \item[$j$:] ``\emph{Unless I commit to $RR$, agent $i$ will play toward agent $k$, leaving agent $k$ with only the choice between $u^1$ and $u^2$ and they will certainly pick $u^2$. If I do commit to $RR$, then agent $i$ will play towards my node and I will be obligated to play $u^4$, regardless of what agent $k$ might commit to.  Thus I have a choice between $u^2$ and $u^4$ and will pick $u^4$ and commit to obeying the threat.}''
    \end{enumerate}
\end{quote}

Thus $j$ will commit to $RR$ and the game will result in the same equilibrium as $G^1$.  The interesting observation about this situation is that $i$'s threat does not depend on $k$ making a specific commitment.  Instead, $i$ is counting on the fact that if $k$ were to make a \emph{threat to the threat}, as in the case when $k$ has the first commitment, the commitment to that threat comes after all other agents have made commitments.  With this timing, $k$ knows they certainly will have to follow through with their threat if they make it.  Thus $j$ knows that the help $k$ offered when they had the first commitment is no longer viable and the old threat from $G^1$ still stands.

\begin{definition}[Downward Closure] \label{downward}
Stackelberg resilience has \emph{downward closure} if $k$-resilience implies $\ell$-resilience for all $\ell \leq k$. 
\end{definition}

The property of \emph{downward closure} means that if a game is resilient for some number of commitments, that resilience will still hold with fewer commitments. We will now show that this property holds. As a warm-up, consider the case of 1-resilience from 2-resilience: here, we claim that omitting the second commitment from a 2-resilient game still yields the same equilibrium. If the game in which only agent 1 has a commitment had a different outcome, there must be at least one node in $G$ for which the corresponding set $I$ is different. Let $G^*$ be the lowest such node and observe that by definition this cannot be a leaf. Suppose $G^*$ were owned by 2; given that 2 has no commitment, 2 will pick the local SPE child from $I^L \cup I^R$. It is easy to see that $\textsf{threaten}(I^L,I^R)\subseteq I^L\subseteq L^L$, where $L^L$ denotes the leaves of the left subtree, and analogously so on the right, $\textsf{threaten}(I^R,I^L)\subseteq I^R\subseteq L^R$. Since this game is 2-resilient, the optimal choice for 2 will correspond to both the universal SPE and the 2-commitment choice and therefore also the local SPE. If 2 played their local SPE in the 2-commitment case, any commitment they had with regards to that particular node was trivial so the removal thereof will not affect the potential threats. Given that the $I^L$ and $I^R$ are the same, but $G^*$ is the lowest deviant node, we have a contradiction. If instead, $G^*$ is owned by 1, $I^L\cup I^R$ will be the same as in the 2-commitment case since $I^L$ and $I^R$ are the same. Furthermore, the threaten mechanism is unchanged and 1 may use any inducible leaf to threaten for any leaf below that node, just as before.  Since $I^L$ and $I^R$ are the same by assumption, there are no new threats to make.  Thus $I$ is unchanged and we again have a contradiction. Thus removing the second commitment in a 2-resilient game will not change the equilibrium. Applying this to both orders of commitment arrangement gives the desired result.

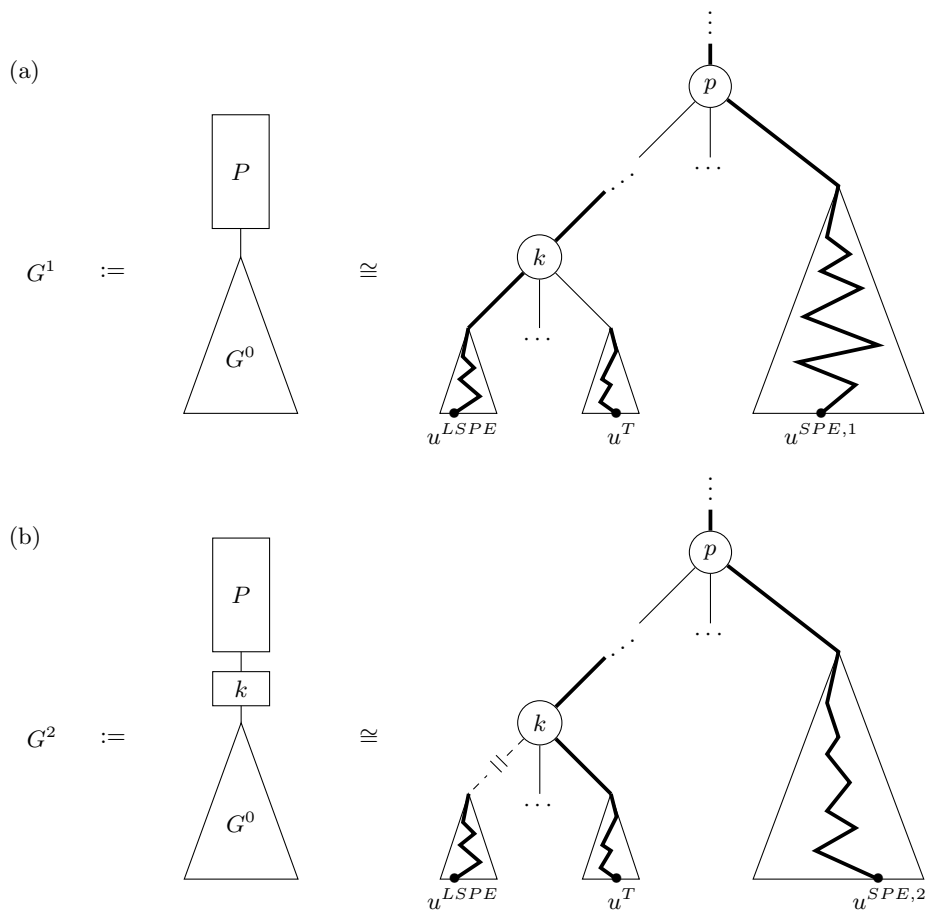
\begin{figure}
    \centering
    \begin{scaletikzpicturetowidth}{\columnwidth}
    \begin{tikzpicture}[scale=\tikzscale]
        \node at (-7.8, 4.5) {(a)};
        \node at (-7.5, 1) {$G^1$};
        \node at (-6.25,0.98) {$:=$};
        \node at (-1.75,1) {$\cong$};
        \draw (-4,1.25) -- (-5,-1.5) -- (-3,-1.5) -- cycle;
        \node at (-4,-0.5) {$G^0$};
        \draw (-4.5, 1.75) -- (-3.5, 1.75) -- (-3.5,3.75) -- (-4.5,3.75) -- cycle;
        \node at (-4,2.75) {$P$};
        \draw (-4,1.25) -- (-4,1.75);
        
        \node at (4.25,5.5) {$\vdots$};
        \node[draw,circle] at (4.25,4.25) (gp) {$p$};
        \draw (gp) -- (4.25,3);
        \node at (4.25,2.8) {$\cdots$};
        \draw[line width=0.5mm] (4.25,5) -- (gp);
        \draw[line width=0.5mm] (gp) -- (6.5,2.5) -- (6.3,1.6) -- (6.7, 1.3) -- (6.2,1) -- (6.9,0.7) -- (5.9,0.2) -- (7.2,-0.3) -- (5.8,-0.6) -- (6.8,-1) -- (6.2,-1.5) node {$\bullet$};
        \node at (6.2,-1.85) {$u^{SPE,1}$};
        \draw (6.5,2.5) -- (5,-1.5) -- (8,-1.5) -- cycle;
        \node[draw,circle] at (1.25,1.25) (gk) {$k$};
        \draw[line width=0.5mm] (gk) -- (0,0);
        \draw[line width=0.5mm] (gk) -- (2.4,2.4);
        \draw (3,3) -- (gp);
        \draw (gk) -- (1.25,0);
        \node at (1.25,-0.2) {$\cdots$};
        \node at (2.7,2.7) {$\iddots$};
        \draw (gk) -- (2.5,0);
        \draw (0,0) -- (-0.5,-1.5) -- (0.5,-1.5) -- cycle;
        \draw[line width=0.5mm] (0,0) -- (-0.1,-0.5) -- (0.1, -0.7) -- (-0.15,-0.9) -- (0.2,-1.2) -- (-0.25,-1.5) node {$\bullet$};
        \begin{scope}[shift={(2.5,0)}]
            \draw (0,0) -- (-0.5,-1.5) -- (0.5,-1.5) -- cycle;
            \draw[line width=0.5mm] (0,0) -- (0.1,-0.4) -- (-0.15, -0.9) -- (0,-1) -- (-0.18,-1.3) -- (0.1,-1.5) node {$\bullet$};
        \end{scope}
        \node at (-0.1,-1.85) {$u^{LSPE}$};
        \node at (2.7,-1.85) {$u^{T}$};
    \end{tikzpicture}\end{scaletikzpicturetowidth}\\
    \begin{scaletikzpicturetowidth}{\columnwidth}
    \begin{tikzpicture}[scale=\tikzscale]
        \node at (-7.8, 4.5) {(b)};
        \node at (-7.5, 1) {$G^2$};
        \node at (-6.25,0.98) {$:=$};
        \node at (-1.75,1) {$\cong$};
        \draw (-4,1.25) -- (-5,-1.5) -- (-3,-1.5) -- cycle;
        \node at (-4,-0.5) {$G^0$};
        \draw (-4.5, 2.5) -- (-3.5, 2.5) -- (-3.5,4.5) -- (-4.5,4.5) -- cycle;
        \node at (-4,3.5) {$P$};
        \node at (-4,1.85) {$k$};
        \draw (-4,1.25) -- (-4,1.55);
        \draw (-4.5,1.55) -- (-3.5,1.55) -- (-3.5,2.15) -- (-4.5,2.15) -- cycle;
        \draw (-4,2.15) -- (-4,2.5);
        
        \node at (4.25,5.5) {$\vdots$};
        \node[draw,circle] at (4.25,4.25) (gp) {$p$};
        \draw (gp) -- (4.25,3);
        \node at (4.25,2.8) {$\cdots$};
        \draw[line width=0.5mm] (4.25,5) -- (gp);
        \draw[line width=0.5mm] (gp) -- (6.5,2.5) -- (6.3,1.6) -- (6.4, 1.3) -- (6.5,1) -- (6.3,0.7) -- (6.7,0.2) -- (6.3,-0.3) -- (6.8,-0.6) -- (6.1,-1) -- (7.2,-1.5) node {$\bullet$};
        \node at (7.4,-1.85) {$u^{SPE,2}$};
        \draw (6.5,2.5) -- (5,-1.5) -- (8,-1.5) -- cycle;
        \node[draw,circle] at (1.25,1.25) (gk) {$k$};
        \draw[dashed] (gk) -- node[fill,white,circle]{$\,$}  (0,0);
        \draw (0.68,0.48) -- (0.48,0.68);
        \draw (0.58,0.38) -- (0.38,0.58);
        \draw[line width=0.5mm] (gk) -- (2.4,2.4);
        \draw (3,3) -- (gp);
        \draw (gk) -- (1.25,0);
        \node at (1.25,-0.2) {$\cdots$};
        \node at (2.7,2.7) {$\iddots$};
        \draw[line width=0.5mm] (gk) -- (2.5,0);
        \draw (0,0) -- (-0.5,-1.5) -- (0.5,-1.5) -- cycle;
        \draw[line width=0.5mm] (0,0) -- (-0.1,-0.5) -- (0.1, -0.7) -- (-0.15,-0.9) -- (0.2,-1.2) -- (-0.25,-1.5) node {$\bullet$};
        \begin{scope}[shift={(2.5,0)}]
            \draw (0,0) -- (-0.5,-1.5) -- (0.5,-1.5) -- cycle;
            \draw[line width=0.5mm] (0,0) -- (0.1,-0.4) -- (-0.15, -0.9) -- (0,-1) -- (-0.18,-1.3) -- (0.1,-1.5) node {$\bullet$};
        \end{scope}
        \node at (-0.1,-1.85) {$u^{LSPE}$};
        \node at (2.7,-1.85) {$u^{T}$};
    \end{tikzpicture}\end{scaletikzpicturetowidth}
    \caption{(a) The game $G^1$ as an extensive-form game. The agent $p \in J$ at some point chooses between the subgame that leads to $u^{SPE,1}$ or the subgame that leads to the key tree, the subgame in which agent $k$ chooses between $u^{LSPE}$ and $u^T$. (b) The game $G^2$ as an extensive-form game. The agent $p \in J$ at some point chooses between the subgame that leads to $u^{SPE,2}$ or the subgame that leads to the subgame where agent $k$ chooses between $u^{LSPE}$ and $u^T$. In order for the SPE to have changed from $u^{SPE,1}$ to $u^{SPE,2}$, agent $k$ must have cut away the subgame that corresponds to $u^{LSPE}$.}
    \label{fig:G1}
\end{figure}

\begin{theorem}\label{thm:downward_trans}
    Stackelberg resilience has downward closure.
\end{theorem}
\begin{proof}
    We show the contrapositive: assume there is game $G^0$ which is not $(k-1)$-resilient, then we claim it is also not $k$-resilient. We assume for simplicity that $G^0$ is in generic form. Let $u^{SPE,0}$ be the utility vector of the subgame perfect equilibrium. We know that $G^0$ is not $(k-1)$-resilient, so there is a list $P$ of $(k-1)$ agents such that giving these agents commitments in the order specified by $P$ results in a utility vector $u^{SPE,1} \neq u^{SPE,0}$. Let $G^1$ be the game that starts with these commitments and ends with $G^0$ (see \cref{fig:G1}). Let $I \subseteq P$ be the agents for whom $u_i^{SPE,1} > u_i^{SPE,0}$, and let $J = P \setminus I$ be its complement. We know that both $I,J \neq \emptyset$, because if $I$ is empty then we would not have $u^{SPE,1} \neq u^{SPE,0}$ and similarly for $J$. 

    Now let $k \in [n] \setminus P$ be arbitrary and define the game $G^2$ that starts with commitments in the order specified by $L$, then has a commitment for agent $k$, and finally a subgame with $G^0$ (see \cref{fig:G1} (b)). Let $u^{SPE,2}$ be the utility vector of the SPE. We can assume w.log. that the only agents in $G^0$ are $I \cup J \cup \{k\}$ (as we can collapse the subgames otherwise).  Our claim is that $u^{SPE,2} \neq u^{SPE,0}$. If $u^{SPE,2} = u^{SPE,1}$ then we are done, so assume $u^{SPE,2} \neq u^{SPE,1}$. Now consider a subgame where the root node is owned by $k$. Define the \emph{local SPE} as the subgame perfect equilibrium for this subgame and call the arrived at utility vector $u^{LSPE}$. In order for $u^{SPE,2} \neq u^{SPE,1}$, there must exist a subgame, the \emph{key tree}, and its corresponding local SPE, such that agent $k$ commits to an action that results in a different utility vector (for that subgame), say $u^T$. By subgame perfection, we must have $u^T_k < u^{LSPE}_k$.

    Since the equilibrium changed from $G^0$ to $G^1$ there must be some subset of agents, say $P \subseteq J$, who committed in $G^1$ to actions that they would not have played in $G^0$, and that these changes resulted in $u^{SPE,1}$. These agents have to have been threatened, as they are now worse off. We know there is a threat from $I$ against $J$ in $G^1$. If this threat were not $u^{LSPE}$ then the equilibrium would not change from $G^1$ to $G^2$ by the introduction of the commitment for $k$ that moves from $u^{LSPE}$ to $u^T$ in that subgame. Let $p \in P$ be any agent who is being threatened using $u^{LSPE}$. Assume for simplicity there is only one such $p$. In order for $u^T$ to nullify the threat of $u^{LSPE}$, we have to have $u^T_p > u^{LSPE}_p$. Thus, the only disruption that $k$ can affect is to commit to $u^T$. Now, consider the following commitment: 
    \begin{quote}
        \begin{enumerate}
            \item[$I$:] ``\emph{I will cut as to necessarily reach the subgame owned by $p$ in every subgame where $p$ does not commit to $u^{SPE,1}$; otherwise I will make the same cut as I did in the commitment that resulted in $u^{SPE,1}$.}'' 
        \end{enumerate}
    \end{quote}
     Consider the choice faced by $k$. Since they are the last agent with a commitment they can see the actions stipulated in the commitments of all the other agents. As argued, agent $k$ can either commit to playing towards $u^{LSPE}$ or they can commit to playing towards $u^T$. If $p$ commits to $u^{SPE,1}$ then we are done. Then we know, based on $I$'s commitment, that $p$ will be faced with the choice of the key tree or $u^{SPE,1}$. Thus if $k$ commits towards $u^T$, they can infer that $p$ will choose the key tree since $u^T_p > u^{SPE,1}_p$. Thus, $k$ sees that a commitment towards $u^T$ will result in that outcome. Consider the choice faced by $p$. If they comply with $I$ and commit to $u^{SPE,1}$ they will get it. If they do not comply with $I$ and play for the key tree, they know that agent $k$ -- seeing this commitment -- will not commit to $u^T$ (as they would prefer $u^{LSPE}$ when given this choice). This gives $p$ the choice between $u^{LSPE}$ and $u^{SPE,1}$, for which the latter outcome is their best response. Thus, we have demonstrated a commitment for $I$ in which they can prevent going back to the old equilibrium. In particular, this means that $G^0$ is not $k$-resilient. \qed
\end{proof}
This can also be viewed as a monotonicity property of the commitments: namely, that any outcome that is inducible with the use of commitments for some agent is still inducible when adding a commitment at the end for another agent.  Note that it is not necessary that the $G^2$ equilibrium is the same as the $G^1$.  The addition of a commitment for $k$ could easily open up even better opportunities for the agents that benefited from the $G^1$ commitments.

\bibliographystyle{alpha}
\bibliography{refs}

\begin{thebibliography}{GDSH12b}

\bibitem[AK19]{contract_1}
Aditya Asgaonkar and Bhaskar Krishnamachari.
\newblock Solving the buyer and seller’s dilemma: A dual-deposit escrow smart contract for provably cheat-proof delivery and payment for a digital good without a trusted mediator.
\newblock In {\em 2019 IEEE International Conference on Blockchain and Cryptocurrency (ICBC)}, pages 262--267, 2019.

\bibitem[CS06]{stackelberg_complexity}
Vincent Conitzer and Tuomas Sandholm.
\newblock Computing the optimal strategy to commit to.
\newblock In {\em Proceedings of the 7th ACM Conference on Electronic Commerce}, EC '06, page 82–90, New York, NY, USA, 2006. Association for Computing Machinery.

\bibitem[GDSH12a]{reverse_stackelberg}
Noortje Groot, Bart De~Schutter, and Hans Hellendoorn.
\newblock Reverse stackelberg games, part i: Basic framework.
\newblock In {\em 2012 IEEE International Conference on Control Applications}, pages 421--426, 2012.

\bibitem[GDSH12b]{reverse_stackelberg_2}
Noortje Groot, Bart De~Schutter, and Hans Hellendoorn.
\newblock Reverse stackelberg games, part ii: Results and open issues.
\newblock In {\em 2012 IEEE International Conference on Control Applications}, pages 427--432. IEEE, 2012.

\bibitem[GDSH14]{groot2014toward}
Noortje Groot, Bart De~Schutter, and Hans Hellendoorn.
\newblock Toward system-optimal routing in traffic networks: A reverse stackelberg game approach.
\newblock {\em IEEE Transactions on Intelligent Transportation Systems}, 16(1):29--40, 2014.

\bibitem[GZDS17]{groot2017hierarchical}
Noortje Groot, Georges Zaccour, and Bart De~Schutter.
\newblock Hierarchical game theory for system-optimal control: Applications of reverse stackelberg games in regulating marketing channels and traffic routing.
\newblock {\em IEEE Control Systems Magazine}, 37(2):129--152, 2017.

\bibitem[HAS21]{smart_contracts}
Mathias Hall-Andersen and Nikolaj~I. Schwartzbach.
\newblock Game theory on the blockchain: A model for games with smart contracts.
\newblock In Ioannis Caragiannis and Kristoffer~Arnsfelt Hansen, editors, {\em Algorithmic Game Theory}, pages 156--170, Cham, 2021. Springer International Publishing.

\bibitem[KCP10]{korzhyk2010complexity}
Dmytro Korzhyk, Vincent Conitzer, and Ronald Parr.
\newblock Complexity of computing optimal stackelberg strategies in security resource allocation games.
\newblock In {\em Proceedings of the AAAI Conference on Artificial Intelligence}, volume~24, pages 805--810, 2010.

\bibitem[LCC84]{inducibleregion}
Peter~B. Luh, Shi-Chung Chang, and Tsu-Shuan Chang.
\newblock Brief paper: Solutions and properties of multi-stage stackelberg games.
\newblock {\em Automatica}, page 251–256, March 1984.

\bibitem[LS23]{landis2023stackelberg}
Daji Landis and Nikolaj Schwartzbach.
\newblock {\em Stackelberg Attacks on Auctions and Blockchain Transaction Fee Mechanisms}.
\newblock 09 2023.

\bibitem[Ric53]{rice}
Henry~Gordon Rice.
\newblock Classes of recursively enumerable sets and their decision problems.
\newblock {\em Transactions of the American Mathematical Society}, 74(2):358--366, 1953.

\bibitem[Sch21]{contract_2}
Nikolaj~I. Schwartzbach.
\newblock An incentive-compatible smart contract for decentralized commerce.
\newblock In {\em 2021 IEEE International Conference on Blockchain and Cryptocurrency (ICBC)}, pages 1--3, 2021.

\bibitem[TKG20]{tajeddini2020decentralized}
Mohammad~Amin Tajeddini, Hamed Kebriaei, and Luigi Glielmo.
\newblock Decentralized hierarchical planning of pevs based on mean-field reverse stackelberg game.
\newblock {\em IEEE Transactions on Automation Science and Engineering}, 17(4):2014--2024, 2020.

\bibitem[vS34]{stackelberg}
Heinrich von Stackelberg.
\newblock {\em Marktform und Gleichgewicht}.
\newblock Verlag von Julius Springer, 1934.

\bibitem[Woo14]{ethereum}
Gavin Wood.
\newblock Ethereum: A secure decentralised generalised transaction ledger.
\newblock {\em Ethereum project yellow paper}, 151:1--32, 2014.

\end{thebibliography}

\end{document}